\documentclass[fleqn]{article}
\bibliographystyle{plainurl}

\title{Generalized Unrelated Machine Scheduling Problem} 


\author{
Shichuan Deng\thanks{IIIS, Tsinghua University, China. 
Email: \texttt{dsc15@mails.tsinghua.edu.cn}
} 
\and Jian Li\thanks{IIIS, Tsinghua University, China. 
Email: \texttt{lijian83@mail.tsinghua.edu.cn}
} 
\and Yuval Rabani\thanks{The Hebrew University of Jerusalem, Israel. 
Email: \texttt{yrabani@cs.huji.ac.il}
}}

\usepackage{amsmath}
\usepackage{amsfonts}
\usepackage{bbm,bm}
\usepackage{caption,subcaption}
\usepackage{amssymb}
\usepackage{graphicx,url}
\usepackage{multirow,array}
\usepackage{makecell}

\usepackage[linesnumbered,ruled,vlined]{algorithm2e}
\usepackage[margin=1in]{geometry}
\usepackage{xspace}
\usepackage[table,dvipsnames]{xcolor}
\usepackage{amsthm}
\usepackage{enumitem}
\usepackage{hyperref}
\hypersetup{
    colorlinks=true,
    linkcolor=black,
    urlcolor=cyan,
    citecolor=ForestGreen
}
\usepackage[capitalise]{cleveref}

\DeclareMathOperator*{\argmax}{arg\,max}

\newcommand*{\E}{\mathbb{E}}

\newcommand{\ceil}[1]{\left\lceil #1 \right\rceil}
\newcommand{\angles}[1]{[ #1 ]}
\newcommand{\assigned}[1]{[ #1 ]}

\newcommand{\da}{\downarrow}
\newcommand{\mcal}[1]{{\mathcal{#1}}}

\newcommand{\prevp}[1]{\mathsf{prev}(#1)}
\newcommand{\nextp}[1]{\mathsf{next}(#1)}

\newcommand{\boldp}{\bm{p}}
\newcommand{\boldu}{\bm{u}}
\newcommand{\boldv}{\bm{v}}
\newcommand{\boldz}{\bm{z}}

\newcommand{\jobs}{\mcal{J}}
\newcommand{\machines}{\mcal{M}}
\newcommand{\load}{\mathsf{load}}
\newcommand{\makespan}{\Phi}
\newcommand{\innorm}{\psi}
\newcommand{\outnorm}{\phi}

\newcommand{\pos}{\mathsf{POS}}
\newcommand{\pwr}{\mathsf{PWR}}
\newcommand{\topp}[1]{\operatorname{Top}_{#1}}
\newcommand{\topl}[2]{\operatorname{Top}_{#1}\left(#2\right)}
\newcommand{\smn}{{\scshape SymMonNorm}\xspace}

\newcommand{\primal}{\operatorname{P-LB}}
\newcommand{\dual}{\operatorname{D-LB}}
\newcommand{\anotherdual}{\operatorname{Q}}
\newcommand{\thresholds}{\bm{R}}
\newcommand{\vco}{\bm{o}}
\newcommand{\vcv}{\bm{v}}
\newcommand{\opt}{\mathsf{opt}}
\newcommand{\rexp}{\vec{\varrho}}
\newcommand{\rexpsub}[1]{\vec{\varrho}_{#1}}

\newcommand{\R}{\mathbb{R}}
\newcommand{\Rpos}{\mathbb{R}_{\geq0}}
\newcommand{\Zpos}{\mathbb{Z}_{\geq0}}

\newcommand{\etalcite}[1]{\textit{et~al.}~\cite{#1}}



\newcommand{\mnlb}{\textsf{MinNormLB}\xspace}
\newcommand{\glb}{\textsf{GLB}\xspace}
\newcommand{\maxtopk}{\textsf{GLB-MaxTopK}\xspace}
\newcommand{\normval}{\textsf{NormLin}\xspace}

\newtheorem{theorem}{Theorem}
\newtheorem{lemma}[theorem]{Lemma}

\newtheorem{claim}[theorem]{Claim}
\newtheorem{observation}[theorem]{Observation}

\theoremstyle{definition}
\newtheorem{definition}[theorem]{Definition}
\theoremstyle{remark}
\newtheorem*{remark*}{Remark}
\newtheorem*{note*}{Note}

\begin{document}
\maketitle

\begin{abstract}
	We study the \emph{generalized load-balancing} (\glb) problem, where we are given $n$ jobs, each of which needs to be assigned to one of $m$ unrelated machines with processing times $\{p_{ij}\}$. 
	Under a job assignment $\sigma$, the \emph{load} of each machine $i$ is $\innorm_i(\boldp_{i}\assigned{\sigma})$ where $\innorm_i:\R^n\rightarrow\Rpos$ is a symmetric monotone norm
	and $\boldp_{i}\assigned{\sigma}$ is the $n$-dimensional vector $\{p_{ij}\cdot \mathbf{1}[\sigma(j)=i]\}_{j\in [n]}$.
    Our goal is to minimize the \emph{generalized makespan} $\outnorm(\load(\sigma))$, where $\outnorm:\R^m\rightarrow\Rpos$ is another symmetric monotone norm and $\load(\sigma)$ is the $m$-dimensional machine load vector. 
    This problem significantly generalizes many classic optimization problems, e.g., makespan minimization, set cover, minimum-norm load-balancing, etc.
	
	We obtain a polynomial time randomized algorithm that achieves an approximation factor of $O(\log n)$, matching the lower bound of set cover up to constant factor. We achieve this by rounding a novel \emph{configuration} LP relaxation
	with exponential number of variables. To approximately solve the configuration LP, we design an approximate separation oracle for its dual program. In particular, the separation oracle can be reduced to the \emph{norm minimization with a linear constraint} (\normval) problem and we devise a polynomial time approximation scheme (PTAS) for it, which may be of independent interest.
\end{abstract}

\section{Introduction}

In the \emph{generalized load-balancing} (\glb) problem, we are given a set $\machines$ of $m$ unrelated machines, a set $\jobs$ of $n$ jobs.
Let $p_{ij}>0$ be the processing time of job $j\in \jobs$ on machine $i\in \machines$, and we denote $\boldp_i=\{p_{ij}\}_{j\in \jobs}$.
Suppose $\sigma:\jobs\rightarrow\machines$ is an assignment of all jobs to machines.
Under assignment $\sigma$, we use  
$\boldp_{i}\assigned{\sigma}$ to denote the $n$-dimensional vector 
$\{p_{ij}\cdot \mathbf{1}[\sigma(j)=i]\}_{j\in \jobs}$ (i.e., we zero out
all entries of $\boldp_i$ that are not assigned to $i$).
The \emph{load} of machine $i$ is define to be $\load_{i}(\sigma)=\innorm_i(\boldp_{i}\assigned{\sigma})$, where $\innorm_i:\R^n\rightarrow\Rpos$ is a symmetric monotone norm\footnote{Recall $\innorm:\R^\mcal{X}\rightarrow\Rpos$ is a norm if: 
(i) $\innorm(\boldu)=0$ if and only if $\boldu=0$, 
(ii) $\innorm(\boldu+\boldv)\leq\innorm(\boldu)+\innorm(\boldv)$ for all $\boldu,\boldv\in\R^\mcal{X}$, 
(iii) $\innorm(\theta\boldu)=|\theta|\innorm(\boldu)$ for all $\boldu\in\R^\mcal{X},\theta\in\mathbb{R}$. A norm $\innorm$ is monotone if $\innorm(\boldu)\leq\innorm(\boldv)$ for all $0\leq\boldu\leq\boldv$, and symmetric if $\innorm(\boldu)=\innorm(\boldu')$ for any permutation $\boldu'$ of $\boldu$.} for each $i\in\machines$, called the \emph{inner norms}.
We define the \emph{generalized makespan} of assignment $\sigma$ as 
$\makespan(\sigma)=\outnorm(\load(\sigma))$
where $\load(\sigma)=\{\load_{i}(\sigma)\}_{i\in\machines}\in\Rpos^m$ (called the load vector) and $\outnorm:\R^m\rightarrow\Rpos$ is another symmetric monotone norm, called the \emph{outer norm}.
Our goal is to find an assignment $\sigma$ such that the generalized makespan $\makespan(\sigma)$ is minimized.

Many special cases of \glb have been studied in the setting of unrelated machine scheduling. 
The classic makespan minimization problem is one such example, where $\outnorm=\mcal{L}_\infty$ and $\innorm_i=\mcal{L}_1$ for each $i\in\machines$. 
For any $\alpha<3/2$, there is no $\alpha$-approximation algorithm for makespan minimization unless $\operatorname{P=NP}$ \cite{lenstra1990approximation}, while improving the current best approximation ratio of 2 \cite{lenstra1990approximation,shmoys1993approximation} remains a longstanding open problem.
Several variants of makespan minimization with the objective being the general $\mcal{L}_p$ norm of the load vector
(i.e., $\outnorm=\mcal{L}_p$ and $\innorm_i=\mcal{L}_1$)
have also been studied extensively.
Constant approximations are known  \cite{alon1998scheduling,azar2005convex, kumar2009unified,makarychev2018optimization}, with better-than-two factors depending on $p$.
Recently, Chakrabarty and Swamy \cite{chakrabarty2019approximation} proposed the minimum-norm load-balancing (\mnlb) problem, which significantly generalizes makespan minimization and its $\mcal{L}_p$ norm variants. 
In \mnlb the outer norm can be a general symmetric monotone norm, but each inner norm is still $\innorm_i=\mcal{L}_1$.
They provided an LP-based constant factor approximation algorithm. 
The approximation factor was subsequently improved to $(4+\epsilon)$ in \cite{chakrabarty2019simpler}, and the current best result is a $(2+\epsilon)$-approximation \cite{ibrahimpur2021minimum}, almost matching the best-known guarantee for makespan minimization.

Although the outer norm has been studied extensively, the inner
norm has received relatively less attention except the $\mcal{L}_1$ case.
However, other inner norms can be naturally motivated.
One example is the $\mcal{L}_\infty$ norm, which can model the RAM size lower bound for each machine, if we view the job sizes $\{p_{ij}\}$ as RAM size requirements and process the assigned jobs sequentially.
Other $\mcal{L}_p$ norms may be used for minimizing average latency \cite{azar2005convex}, energy efficient scheduling \cite{makarychev2018optimization}, etc.

Before stating our results for \glb, we discuss a line of closely related problems.
Here the load of machine $i$ is defined by a submodular set function mapping the assigned jobs $\sigma^{-1}(i)$ to a scalar.
Svitkina and Fleischer \cite{svitkina2011submodular} study this problem where the outer norm is $\outnorm=\mcal{L}_\infty$ and show that there is no polynomial time algorithm that can achieve an approximation ratio of $o(\sqrt{n/\log n})$. They also provide a factor $O(\sqrt{n\log n})$ approximation algorithm. 
When $\outnorm=\mcal{L}_1$, the problem is known as the minimum submodular-cost allocation problem \cite{chekuri2011submodular}, and admits a tight $O(\log n)$-approximation, with set cover being its special case \cite{svitkina2010facility}.
This result implies a nontrivial approximation for the following setting of \glb.
Let $\topp{k}$ be the top-$k$ norm defined as follows: for any vector $\boldu \geq 0$, 
$\topl{k}{\boldu}$ outputs the sum of the $k$ largest entries in $\boldu$.
For each $J\subseteq\jobs$, define $\boldp_i\angles{J}=\{p_{ij}\cdot\mathbf{1}[j\in J]\}_{j\in\jobs}$.
It is easy to verify that $f(J)=\topl{k_i}{\boldp_i\angles{J}}$ is a submodular set function for each $k_i\in[n]$ and $\boldp_i\geq0$. Thus, \glb with $\outnorm=\mcal{L}_1$ and $\innorm_i=\topp{k_i}$ can be reduced to the minimum submodular-cost allocation problem and 
we readily obtain an $O(\log n)$-approximation using the result in \cite{svitkina2010facility}.

However, the reduction above does not work for general symmetric monotone inner norm $\innorm_i$,
since a general symmetric monotone norm does not necessarily induce a submodular set function. 
For example, let $\innorm_i(\boldu)=\max\{2\topl{1}{\boldu},\topl{3}{\boldu}\}$, and $\boldp_i=\mathbf{1}_n$ (all processing time $p_{ij}=1$).
It is easy to see that $\innorm_i$ is indeed a symmetric monotone norm.
However, assigning job-sets $\{j_1\}$, $\{j_1,j\}$, $\{j_1,j_2\}$, $\{j_1,j_2,j\}$ incurs loads $2,2,2,3$, respectively. The property of diminishing marginal returns (see, e.g., \cite{edmonds2001submodular}) is violated and the function $f(J)=\innorm_i(\boldp_i\angles{J})$ is not submodular.

\subsection{Our Contributions}

Motivated by the recent progress on unrelated machine scheduling problems with more general objectives, we  study \glb systematically with different combinations of inner norms and outer norms.
A summary of known and new results can be found in Figure~\ref{figure:result:table}.

The first natural question is whether it is possible to obtain constant factor approximation algorithms for 
\glb with general outer and inner norms.
However, it is not difficult to see an $\Omega(\log n)$ lower bound for \glb, even when $\innorm_i=\mcal{L}_\infty$
and $\outnorm$ is simply the $\mcal{L}_1$ norm, via reduction from the unweighted set cover problem.
Formally, in an unweighted set cover instance, we are given a family of $m$ subsets $\mcal{S}=\{S_1,S_2,\dots,S_m\}$ of $[n]$, and the goal to find $I\subseteq[m]$ with minimum cardinality such that $\bigcup_{i\in I}S_i=[n]$.
We identify $m$ machines with $[m]$ and $n$ jobs with $[n]$. 
For each machine $i\in[m]$ and job $j\in[n]$, define the job size $p_{ij}=1$ if $j\in S_i$, and $\infty$ if $j\notin S_i$. The inner norms are $\innorm_i=\mcal{L}_\infty$ for each $i\in\machines$, and the outer norm $\outnorm$ is simply the $\mcal{L}_1$ norm. 
It is easy to verify that, any assignment $\sigma$ with a \emph{finite} generalized makespan $\makespan(\sigma)$ corresponds to a solution $I_\sigma=\{i\in[m]:\sigma^{-1}(i)\neq\emptyset\}$ to the set cover instance with the same objective, and vice versa. 
Using the $\operatorname{NP}$-hardness of approximating set cover \cite{dinur2014analytical}, we obtain the following theorem.

\begin{theorem}
\label{theorem:lower:bound:glb}
For every fixed constant $\epsilon>0$, it is $\operatorname{NP}$-hard to approximate \emph{\glb} within a factor of $(1-\epsilon)\ln n$, even when $\outnorm=\mcal{L}_1$ and $\innorm_i=\mcal{L}_\infty$ for each $i\in\machines$.
\end{theorem}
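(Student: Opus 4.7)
The plan is to formalize the reduction from unweighted set cover that the excerpt sketches, and then invoke the $\mathrm{NP}$-hardness of approximating set cover within $(1-\epsilon)\ln n$ due to Dinur and Steurer~\cite{dinur2014analytical}. Given a set cover instance $\mathcal{S}=\{S_1,\dots,S_m\}$ over ground set $[n]$, I build a \glb instance with $m$ machines and $n$ jobs as in the excerpt: $p_{ij}=1$ if $j\in S_i$, and otherwise $p_{ij}=M$ where $M$ is a sufficiently large finite value (using a large finite $M$ rather than $\infty$ keeps the instance within the stated problem format and does not affect the argument). Take $\psi_i=\mathcal{L}_\infty$ for every machine $i$ and $\phi=\mathcal{L}_1$.

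Next I would establish the correspondence between solutions. For any set cover $I\subseteq[m]$, distributing each $j\in[n]$ to some machine $i\in I$ with $j\in S_i$ yields an assignment $\sigma$ in which every used machine has $\load_i(\sigma)=\psi_i(\boldp_i\assigned{\sigma})=1$ and every unused machine has load $0$, so $\makespan(\sigma)=|I|$. Conversely, given any assignment $\sigma$ with $\makespan(\sigma)<M$, define $I_\sigma=\{i:\sigma^{-1}(i)\neq\emptyset\}$; each machine in $I_\sigma$ must receive only jobs $j\in S_i$ (otherwise its $\mathcal{L}_\infty$ load would already be $M$ and push $\makespan(\sigma)\geq M$), and every job is covered, so $I_\sigma$ is a feasible set cover with $|I_\sigma|=\makespan(\sigma)$. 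Thus the optimal values match exactly.

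The only genuine subtlety is ensuring an approximation to \glb returns a finite-makespan solution. Since the set cover optimum is at most $m\leq n$, choosing $M > n\cdot (1-\epsilon)\ln n$ (for the desired hardness factor) guarantees that any $(1-\epsilon)\ln n$-approximate assignment has makespan below $M$ and therefore yields a valid set cover of the same objective value. Composing with Dinur–Steurer's hardness then rules out a polynomial-time $(1-\epsilon)\ln n$-approximation for \glb under $\mathrm{P}\neq\mathrm{NP}$, as claimed.

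I expect the construction and the two directions of the correspondence to be routine; the main (mild) technical point is handling the ``forbidden'' entries $p_{ij}$ for $j\notin S_i$ cleanly — i.e., choosing $M$ large enough so that the reduction is approximation-preserving while keeping the instance legal under the problem's assumption $p_{ij}>0$ and finite. No deeper obstacle is anticipated.
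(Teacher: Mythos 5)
Your proposal is correct and follows essentially the same reduction from unweighted set cover that the paper uses, invoking the Dinur--Steurer hardness result in the same way. Your refinement of using a large finite $M$ (rather than $\infty$) is a sensible technical tidying-up; the one tiny slip is the claim ``$m\leq n$,'' which need not hold, but the relevant bound $\opt\leq n$ (one set per element) is correct and suffices.
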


On the positive side, we show, somewhat surprisingly, that \glb with general inner and outer norms,
admits an $O(\log n)$ factor approximation algorithm, almost matching the approximability of set cover,
a very special case of \glb. This is the main technical result of this paper.

\begin{theorem}\label{theorem:main:normnorm}
There exists a polynomial time randomized algorithm for \emph{\glb} that, with high probability, outputs an $O(\log n)$-approximate solution.
\end{theorem}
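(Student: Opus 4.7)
The plan is to produce an $O(\log n)$-approximate integral assignment by rounding a configuration-style LP relaxation, following a guess-and-round strategy in the spirit of recent work on minimum-norm load balancing, but with a substantially more delicate LP because the inner norms are arbitrary rather than $\mcal{L}_1$.

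I would first binary-search over a polynomial-size grid for the optimum $\opt$, reducing to the following feasibility question for a fixed guess $T \geq \opt$: either output an assignment $\sigma$ with $\outnorm(\load(\sigma)) = O(\log n)\cdot T$, or report infeasibility of $\outnorm(\load(\sigma)) \leq T$. To this end, I would write a configuration LP whose variables $x_{i,J}\in[0,1]$ are indexed by machines $i \in \machines$ and by configurations $J \subseteq \jobs$ satisfying $\innorm_i(\boldp_i\angles{J}) \leq T$. Beyond the usual machine constraint $\sum_J x_{i,J} \leq 1$ and the job-covering constraint $\sum_{(i,J):\, j\in J} x_{i,J} \geq 1$, the LP must also encode $\outnorm(\load) \leq T$. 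Using the Chakrabarty--Swamy characterization that any symmetric monotone norm is constant-factor approximable by the maximum of polynomially many scaled top-$k$ norms, I would enforce, for each relevant $k$, a constraint of the form $\sum_i \max(\ell_i - t_k, 0) + k\cdot t_k \leq T_k$, where $\ell_i = \sum_J x_{i,J}\cdot \innorm_i(\boldp_i\angles{J})$ is the fractional load and $t_k$ is an auxiliary LP variable. This is a standard LP linearization of the top-$k$ norm and gives polynomially many linear outer-norm constraints.

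Since the LP has exponentially many variables, I would solve it via the ellipsoid method on its dual, which has one constraint per configuration $(i,J)$. For each machine $i$, the dual separation problem is to find a subset $J$ maximizing $\sum_{j\in J} y_j$ subject to a bound involving $\innorm_i(\boldp_i\angles{J})$, where the $y_j$'s aggregate the dual multipliers of the covering and outer-norm constraints. This is exactly the \normval problem treated in the paper, for which a PTAS is established. Feeding this PTAS into the ellipsoid algorithm yields, in polynomial time, an LP solution that is $(1+\epsilon)$-approximately optimal and supported on polynomially many configurations.

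Given a fractional solution $x^*$, I would apply randomized rounding at scale $\Theta(\log n)$: for each machine $i$ independently and for $\Theta(\log n)$ rounds, sample a configuration $J$ with probability proportional to $x^*_{i,J}$; then assign each job $j$ to some machine that selected a configuration covering $j$, breaking ties arbitrarily. A union bound gives that every job is covered with high probability. The main obstacle is bounding $\outnorm(\load)$ on the rounded solution. Per machine, the inner load accumulates contributions from up to $O(\log n)$ sampled configurations; using subadditivity and monotonicity of $\innorm_i$ together with a truncation argument for ``heavy'' configurations, each rounded machine load is $O(\log n)$ times the corresponding LP load with high probability. Across machines the samples are independent, so the rounded loads are independent bounded random variables, and I can apply a Chernoff-style concentration inequality to each of the polynomially many top-$k$ outer-norm constraints, then take a union bound, to conclude $\outnorm(\load) = O(\log n)\cdot T = O(\log n)\cdot\opt$ with high probability. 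The delicate point is the interplay between inner-norm truncation and the top-$k$ decomposition of $\outnorm$; a clean way to handle it is to bucket the processing times $p_{ij}$ into geometric scales and argue about each scale separately before combining via the norm inequalities.
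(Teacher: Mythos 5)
Your high-level plan (configuration LP with $\topp{k}$ outer-norm constraints, ellipsoid on the dual with a \normval-type separation oracle, randomized rounding at scale $\Theta(\log n)$, subadditivity of inner norms to merge the $O(\log n)$ configurations landing on each machine) does match the paper in broad strokes. However, there are three concrete gaps.

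\textbf{The LP encoding of the outer norm is not rounding-compatible.} You enforce $\sum_i(\ell_i - t_k)^+ + kt_k \leq T_k$ with $\ell_i = \sum_J x_{i,J}\,\innorm_i(\boldp_i\angles{J})$, i.e., you bound the $\topp{k}$ norms of the \emph{averaged fractional machine loads}. The paper's constraint \eqref{lp:normnorm:primal1} is instead $k\rho_k + \sum_{i,J}\bigl(h(\innorm_i(J)/\tau)-\rho_k\bigr)^+ x_{i,J} \leq \topl{k}{\rexp}$, a \emph{per-configuration} form with a fixed threshold $\rho_k$ and a discretizing round-up map $h$. This distinction is essential: after rounding, the realized machine load $\innorm_i(J_i)$ bears no useful relation to $\ell_i$ (a machine can have tiny $\ell_i$ but one configuration with $\innorm_i(J)$ comparable to $\opt$ at small $x_{i,J}$; randomized rounding picks it with non-negligible probability). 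With the per-configuration form, $\sum_{(i,J)\in\mathcal{I}}(h(\cdot)-\rho_k)^+$ is in expectation $T$ times the LP term, which is exactly what the rounding analysis needs.

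\textbf{Chernoff concentration does not give the outer-norm bound.} You propose to apply Chernoff to each $\topp{k}$ constraint and union-bound. But the top-$k$ sums of the realized loads are not sums of well-concentrated independent variables: as above, individual loads can have heavy tails relative to their means, and for small $k$ a single overloaded machine dominates $\topl{k}{\load}$. The paper instead buckets configurations into $O(\log n)$ classes via $h$ (which is why the guess has at most $\log_2 n+4$ distinct entries), applies Markov per class to obtain a \emph{constant}-probability bound per class, and independence of the classes gives only $\Omega(1/n)$ overall success probability per trial; the rounding is then repeated $O(n)$ times to boost to constant probability. Your claimed high-probability-per-trial bound cannot be obtained this way, and the round-up function $h$ (together with the guess of $\thresholds$ that limits its number of distinct values) is the missing device that makes the Markov-per-class argument go through.

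\textbf{Approximate separation is not the same as approximate LP solving.} You write that feeding the \normval PTAS into ellipsoid ``yields, in polynomial time, an LP solution that is $(1+\epsilon)$-approximately optimal.'' An approximate separation oracle can fail to report a violated constraint even when one exists, so ellipsoid run naively may terminate with a wrong answer. The paper handles this with a round-or-cut argument over \emph{two} parameterizations of the polytope $\anotherdual(\thresholds,\lambda,\tau)$: the oracle either returns a genuinely violated constraint of $\anotherdual(\thresholds,1/2,3/2)$, or certifies that $\anotherdual(\thresholds,1,1)$ is non-empty, and for the correct guess $\thresholds=\thresholds^\star$ the latter leads to a contradiction with the feasibility of $\primal(\thresholds,1,1)$ (via \cref{observation:q}). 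The slackened parameters $(\lambda,\tau)=(1/2,3/2)$ also account for the $3/2$-approximation the PTAS provides in a fixed way; this coupling needs to be made explicit for the argument to close.

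Smaller points: the paper does not binary search for $\opt$; it guesses the machine $i^\star$ and job $j^\star$ to normalize, and then enumerates threshold vectors $\thresholds$. Also, you need a bound on the \emph{support size} of the vertex solution (the paper gets $3n+5$) to control how many machines can receive multiple configurations; without this, the union bound over machine duplicates would degrade.
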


We also study another special case.
In particular, we obtain an LP-based $(3+\epsilon)$-approximation for the special case where the outer norm $\outnorm$ is $\mcal{L}_\infty$, and the inner norm is $\innorm_i=\topp{k_i},\,k_i\in[n]$ for each machine $i\in\machines$. 
We provide details of this algorithm (\cref{theorem:maxoftopl}) in the appendix.


\begin{figure}[t]
    \centering
    {\renewcommand{\arraystretch}{1.2}
    \begin{tabular}
    {|l|l|l|l|}
 \hline
 \multirow{2}{*}{
 	\makecell[l]{\textbf{Machine load} \\ 
 	\textbf{(inner norm $\innorm_i$)}}} 
 & \multicolumn{3}{c|}{
 	\textbf{Generalized makespan (outer norm $\outnorm$)}}\\
 \cline{2-4}
 & $\mcal{L}_1$ (SUM) & $\mcal{L}_\infty$ (MAX) & \smn\\
 \hline\hline
\multirow{2}{*}{$\mcal{L}_1$ (SUM)} & \multirow{2}{*}{\textit{trivial}} & 
\multirow{2}{*}{\makecell[cl]{
LB: 3/2 \cite{lenstra1990approximation};\\
UB: 2 \cite{lenstra1990approximation,shmoys1993approximation}}} & 
\multirow{2}{*}{\makecell[cl]{
LB: 3/2 \cite{lenstra1990approximation};\\
UB: $2+\epsilon$ \cite{ibrahimpur2021minimum}}}\\
&&&\\
 \hline
$\mcal{L}_\infty$ (MAX) & \multirow{3}{*}{\makecell[cl]{LB: $(1-\epsilon)\ln n$\\
\quad(\cref{theorem:lower:bound:glb});\\
UB: $O(\log n)$ \cite{svitkina2010facility}}} 
 & \textit{trivial} 
 & \multirow{6}{*}{\makecell[cl]{
 LB: $(1-\epsilon)\ln n$\\
 \quad(\cref{theorem:lower:bound:glb});\\
 \\
 UB: $O(\log n)$\\ \quad{\bfseries(\cref{theorem:main:normnorm})}}} \\
 \cline{1-1}\cline{3-3}
 \multirow{2}{*}{$\operatorname{Top}_{k_i}$ norm} &  & \multirow{2}{*}{\makecell[cl]{
 LB: 3/2 \cite{lenstra1990approximation}; UB: $3+\epsilon$\\ \quad{\bfseries(\cref{theorem:maxoftopl})}}} & \\
 & & &\\
 \cline{1-3}
 \multirow{3}{*}{\smn} & \multirow{2}{*}{\makecell[cl]{LB: $(1-\epsilon)\ln n$\\
\quad(\cref{theorem:lower:bound:glb});}} &
\multirow{2}{*}{\makecell[cl]{LB: 3/2 \cite{lenstra1990approximation};}} & \\
 &&&\\
 \cline{2-3}
 &\multicolumn{2}{l|}{\makecell[cl]{UB: $O(\log n)$ {\bfseries(\cref{theorem:main:normnorm})}}}&\\
 \hline
\end{tabular}}
    \caption{A summary of \glb cases with different objectives, where $n$ is the number of given jobs.
    LB (lower bound) represents known inapproximability results, and UB (upper bound) represents current approximation guarantees.
    \smn refers to symmetric monotone norms.}
    \label{figure:result:table}
\end{figure}

\subparagraph{Overview of our algorithm for Theorem \ref{theorem:main:normnorm}.}
We devise a randomized rounding algorithm based on a novel configuration LP. 
A configuration is an arbitrary set of jobs.
We create variables $x_{i,J}\in[0,1]$ for each $i\in\machines$ and $J\subseteq\jobs$, indicating whether $J$ is the configuration of $i$. 
Our goal is to obtain a fractional solution $x\in[0,1]^{\machines\times2^\jobs}$, such that its generalized makespan is bounded by some unknown optimal solution.
To this end, using a standard technique (see, e.g., \cite{chakrabarty2019approximation,ibrahimpur2021minimum}), we guess a logarithmic number of constraints, also referred to as ``budget constraints'', and guarantee that at least one such guess produces a feasible LP that leads to a good approximate solution.
We also add various ingredients to facilitate the later rounding procedure, by restricting the support size of vertex solutions, and including parameters to tighten and relax the constraints.

Though the number of variables is exponential, there are only a polynomial number of constraints, thus we consider solving its dual.
Unfortunately, the dual constraints are difficult to separate exactly, so we consider approximate separation on \emph{another} parameterized linear program $\operatorname{Q}$, and employ the following round-or-cut framework, which has recently been a powerful tool in the design of approximation algorithms (see, e.g., \cite{anegg2020technique,carr2000strengthen,chakrabarty2018generalized}): Suppose given any candidate dual solution $y$, we either find a violated constraint of $\operatorname{Q}$, or certify that another differently parameterized program $\operatorname{Q}'$ is feasible, then the ellipsoid algorithm, in polynomial time, either concludes that $\operatorname{Q}$ is infeasible, or that $\operatorname{Q}'$ is feasible. By establishing an equivalence between the feasibility of $\operatorname{Q}$ (also $\operatorname{Q}'$) and the original relaxation, and enumerating all possible $\operatorname{Q}$, we can solve the primal relaxation up to a constant accuracy.

We use randomized rounding on the primal solution similar to weighted set cover (see, e.g., \cite{vazirani2001approximation}), and assign all jobs with high probability. 
At this junction, we isolate two factors that affect the final approximation guarantee. 
The first arises from assigning multiple distinct configurations to the same machine, incurring a proportional approximation factor. 
The second comes from the violation of budget constraints in the primal relaxation by the integral solution, and the approximation ratio is proportional to the maximum factor of violation. 
These two factors add up to $O(\log n)$ in the rounding analysis, yielding our main result.

Finally, we highlight a useful subroutine for the so-called \emph{norm minimization with a linear constraint} (\normval) problem, which we employ in the approximate separation oracle. 
Roughly speaking, \normval generalizes the min-cost version of the knapsack problem with a value lower bound, using a symmetric monotone norm objective (see \cref{section:oracle} for the precise definition). We devise a polynomial time approximation scheme (PTAS) for \normval. To the best of our knowledge, this is the first known approximation algorithm for this problem, which may be useful in other context.
Svitkina and Fleischer \cite{svitkina2011submodular} consider a similar problem, where the objective is a submodular set function. They show a lower bound of $\Omega(\sqrt{n/\log n})$ (for any bi-criteria approximation), even for monotone submodular functions and 0-1 values.



\subsection{Related Work}

In a closely related \emph{simultaneous optimization} problem, one seeks a job assignment incurring a load vector that simultaneously approximates all optimums under different \emph{outer} norms. 
Though no simultaneous $\alpha$-approximate solutions exist for unrelated machines for any constant $\alpha$, even for simple $\innorm_i=\mcal{L}_1$ inner norms \cite{azar2004all}, Alon \etalcite{alon1998scheduling} give an algorithm in the case of restricted assignment (i.e., each job has a fixed size but can only be assigned to a subset of machines) and unit-size jobs, which is simultaneously optimal for all $\mcal{L}_p$ norms.
Azar \etalcite{azar2004all} extend this result to a simultaneous 2-approximation for all $\mcal{L}_p$ norms under restricted assignment.
This is generalized to a simultaneous 2-approximation (again in the restricted assignment setting) for all symmetric monotone norms by Goel and Meyerson \cite{goel2006simultaneous}.

Another relevant result is the generalized machine activation problem introduced by Li and Khuller \cite{li2011generalized}, where a machine activation cost is incurred for each machine $i$, by applying a non-decreasing, piece-wise linear function $\omega_i:\Rpos\rightarrow\Rpos$ on the sum of job sizes assigned to $i$. They achieve a bi-criteria approximation guarantee on fractional solutions, and obtain various almost-tight application results.

The configuration LP is used in many allocation/assignment optimization problems, for example, the Santa Claus problem and fair allocation of indivisible goods \cite{asadpour2012santa,asadpour2010approximation,bansal2006santa,feige2008allocations}, and restricted scheduling on unrelated machines \cite{jansen2017configuration,svensson2012santa}.

\subsection{Organization}

We start by stating some notations and preliminaries in \cref{section:preliminary}. 
In \cref{section:normnorm}, we present our main LP-rounding algorithm, and the proof of \cref{theorem:main:normnorm}. We provide the details of the approximate separation oracle in \cref{section:oracle}. Finally, we provide a constant factor approximation algorithm for a special case of \glb in the appendix.

\section{Preliminaries}\label{section:preliminary}

Throughout this paper, for vector $\boldu\in\Rpos^\mcal{X}$, define $\boldu^\da$ as the non-increasingly sorted version of $\boldu$, and $\boldu\angles{S}=\{\boldu_j\cdot\mathbf{1}[j\in S]\}_{j\in\mcal{X}}$ for each $S\subseteq\mcal{X}$.
Let $\topp{k}:\R^\mcal{X}\rightarrow\Rpos$ be the top-$k$ norm that returns the sum of the $k$ largest \emph{absolute values} of entries in any vector, $k\leq |\mcal{X}|$.
Denote $[n]$ as the set of positive integers no larger than $n\in\mathbb{Z}$, and $a^+=\max\{a,0\},\,a\in\mathbb{R}$. 

\begin{claim}\label{claim:topl}
\emph{(\cite{ibrahimpur2021minimum}).}
For each $n$-dimensional vector $\boldu\geq0$ and $k\in[n]$, one has \[\topl{k}{\boldu}=\min_{t\geq0}\Big\{kt+\sum_{j\in[n]}(\boldu_j-t)^+\Big\}
=k\boldu^\da_k+\sum_{j\in[n]}(\boldu_j-\boldu^\da_k)^+,\]
i.e., the minimum is attained at the $k$-th largest entry of $\boldu$.
\end{claim}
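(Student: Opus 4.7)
The plan is to verify the two equalities in sequence by exploiting the simple identity $a = t + (a-t)^+$ whenever $a \geq t$, together with the fact that $\topl{k}{\boldu}$ is just the sum of the $k$ largest entries of $\boldu$. Since $\boldu \geq 0$, I can work directly with the sorted vector $\boldu^\da$ throughout.

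For the inequality $\topl{k}{\boldu} \leq kt + \sum_{j\in[n]}(\boldu_j - t)^+$, which holds for every $t \geq 0$, I would start from $\topl{k}{\boldu} = \sum_{i=1}^{k}\boldu^\da_i$ and apply the pointwise bound $\boldu^\da_i \leq t + (\boldu^\da_i - t)^+$ (trivially true whether $\boldu^\da_i \geq t$ or $\boldu^\da_i < t$). Summing over $i=1,\dots,k$ yields $\topl{k}{\boldu} \leq kt + \sum_{i=1}^k (\boldu^\da_i - t)^+$, and then extending the sum to all $n$ indices can only increase the right-hand side since every summand is non-negative. This establishes $\topl{k}{\boldu} \leq \min_{t \geq 0}\{kt + \sum_j (\boldu_j - t)^+\}$.

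For the matching lower bound, I would plug in the specific choice $t = \boldu^\da_k$. Because the vector is sorted non-increasingly, $(\boldu^\da_i - \boldu^\da_k)^+ = \boldu^\da_i - \boldu^\da_k$ for $i \leq k$ and $(\boldu^\da_i - \boldu^\da_k)^+ = 0$ for $i > k$. Thus
\[
k\boldu^\da_k + \sum_{j\in[n]}(\boldu_j - \boldu^\da_k)^+ = k\boldu^\da_k + \sum_{i=1}^{k}(\boldu^\da_i - \boldu^\da_k) = \sum_{i=1}^{k}\boldu^\da_i = \topl{k}{\boldu}.
\]
Combining this with the upper bound gives both stated equalities and shows that the minimum is attained at $t = \boldu^\da_k$.

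Since the argument is short and each step is self-contained, I do not expect any substantive obstacle. The only minor subtlety is making sure the case $\boldu^\da_i < t$ is handled in the upper-bound direction (where the identity $a = t + (a-t)^+$ becomes a strict inequality), but this just strengthens the desired inequality. Ties in the sorted order cause no issues either, because the computation at $t = \boldu^\da_k$ splits cleanly into indices with $\boldu^\da_i \geq \boldu^\da_k$ (contributing to the sum) and $\boldu^\da_i \leq \boldu^\da_k$ (contributing zero to the positive-part sum), and the final tally is unaffected.
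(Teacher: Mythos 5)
Your proof is correct and complete: the pointwise bound $\boldu^\da_i \leq t + (\boldu^\da_i - t)^+$ gives the upper bound for every $t\geq 0$, and substituting $t=\boldu^\da_k$ shows the bound is tight, with the tie-breaking case handled cleanly. Note that the paper itself does not prove this claim; it is stated with a citation to Ibrahimpur~\etal~\cite{ibrahimpur2021minimum}, so there is no in-paper proof to compare against — your argument is the standard one and matches what appears in that reference.
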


The following lemma is due to Goel and Meyerson \cite{goel2006simultaneous}.

\begin{lemma}\label{lemma:majorization}
\emph{(\cite{goel2006simultaneous}).}
	If $\boldu,\boldv\in\Rpos^\mcal{X}$ and $\alpha\geq0$ satisfy $\topl{k}{\boldu}\leq\alpha\cdot\topl{k}{\boldv}$ for each $k\leq|\mcal{X}|$, one has $\innorm(\boldu)\leq\alpha\cdot \innorm(\boldv)$ for any symmetric monotone norm $\innorm:\R^\mcal{X}\rightarrow\Rpos$.
\end{lemma}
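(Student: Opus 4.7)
My plan is to express the symmetric monotone norm $\innorm$ via duality and then, through Abel (partial) summation, decompose the inner product into a non-negative combination of top-$k$ norms, so that the hypothesis can be applied term by term. First, since $\innorm$ is symmetric and the hypothesis only involves top-$k$ norms (which depend solely on sorted values), I may assume without loss of generality that $\boldu = \boldu^\da$ and $\boldv = \boldv^\da$. Writing $n = |\mcal{X}|$, the hypothesis then reads $\sum_{j=1}^k u_j \le \alpha \sum_{j=1}^k v_j$ for every $k \in [n]$.

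Next I would invoke the dual representation $\innorm(\boldu) = \max\{\langle \bm{w},\boldu\rangle : \innorm^*(\bm{w}) \le 1\}$. Because $\innorm$ is symmetric and monotone, its dual norm $\innorm^*$ inherits both properties; and because $\boldu \ge 0$ is already sorted non-increasingly, the maximizer $\bm{w}$ can be chosen both non-negative (by monotonicity of $\innorm^*$) and non-increasing (by the rearrangement inequality against $\boldu$). Let $\bm{w}^*$ be such a maximizer and extend it by $w^*_{n+1} = 0$. Abel summation then gives
\[
\innorm(\boldu) = \langle \bm{w}^*, \boldu\rangle = \sum_{k=1}^n (w^*_k - w^*_{k+1}) \sum_{j=1}^k u_j = \sum_{k=1}^n (w^*_k - w^*_{k+1})\cdot \topl{k}{\boldu},
\]
with every coefficient $w^*_k - w^*_{k+1} \ge 0$. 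Applying the hypothesis termwise and then the dual inequality for $\boldv$,
\[
\innorm(\boldu) \le \alpha \sum_{k=1}^n (w^*_k - w^*_{k+1}) \topl{k}{\boldv} = \alpha \langle \bm{w}^*, \boldv\rangle \le \alpha\, \innorm(\boldv),
\]
which is the desired conclusion.

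The main obstacle, as I see it, is justifying cleanly the restriction of the dual maximizer $\bm{w}$ to non-increasing non-negative vectors: monotonicity of $\innorm^*$ forces $\bm{w} \ge 0$, and a rearrangement-inequality reduction forces $\bm{w}$ to be co-sorted with $\boldu$. These are standard facts about symmetric gauge functions but deserve a one-line verification. Once they are in hand, the Abel-summation identity, which converts any non-increasing-weights linear functional into a non-negative combination of the $\topp{k}$, does all of the work and the hypothesis transfers directly from each top-$k$ norm to $\innorm$.
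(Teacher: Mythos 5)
The paper states this lemma by citation to \cite{goel2006simultaneous} and gives no proof of its own, so there is nothing in the paper to compare your argument against; I will assess it on its own terms. Your duality-plus-Abel-summation proof is correct. The only step that needs a word of justification is the one you flag yourself: that the dual maximizer $\bm{w}$ may be taken nonnegative and non-increasing. This is legitimate, but the cleanest way to make it rigorous is to first replace $\innorm$ by the absolute norm $\tilde{\innorm}(\boldu):=\innorm(|\boldu|)$, which is still a symmetric monotone norm and agrees with $\innorm$ on $\Rpos^{\mcal{X}}$, the only vectors the lemma ever evaluates. (With the paper's weak definition of monotone --- comparisons only between nonnegative vectors --- a symmetric monotone norm need not itself be absolute, so $\innorm^*$ need not be literally monotone; the replacement sidesteps this.) After that replacement, $\tilde{\innorm}^*$ is absolute, monotone, and permutation-invariant, so flipping a negative coordinate of $\bm{w}$ to its absolute value and sorting $\bm{w}$ non-increasingly preserve $\tilde{\innorm}^*(\bm{w})\le 1$ while not decreasing $\langle\bm{w},\boldu\rangle$ for sorted nonnegative $\boldu$, exactly as you claim. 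The Abel identity then writes $\langle\bm{w}^*,\boldu\rangle$ as a nonnegative combination of $\topl{k}{\boldu}$, the hypothesis transfers termwise, and the reverse Abel identity together with $\langle\bm{w}^*,\boldv\rangle\le\innorm(\boldv)$ finishes. This is a standard and self-contained alternative to the more common majorization route (weak submajorization $\Rightarrow$ doubly substochastic matrix $\Rightarrow$ convex combination of sub-permutations, then invoke symmetry and monotonicity); both are fine, and yours has the advantage of making the role of the $\topp{k}$ norms completely explicit.
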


We need the following Chernoff bounds.

\begin{lemma}\label{lemma:chernoff}
\emph{(Chernoff bounds (see, e.g., \cite{mitzenmacher2005probability})).}
Let $X_1,\dots,X_n$ be independent Bernoulli variables with $\E[X_i]=p_i$. 
Let $X=\sum_{i=1}^nX_i$ and $\mu=\E[X]=\sum_{i=1}^np_i$. 
For $\nu\geq6\mu$, one has
\(
\Pr[X\geq\nu]\leq 2^{-\nu}
\).
\end{lemma}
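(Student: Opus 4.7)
The plan is to invoke the standard Cramér--Chernoff exponential-moment method, which is the textbook route for tail bounds on sums of independent bounded variables. First I would fix a free parameter $t > 0$ and apply Markov's inequality to the nonnegative random variable $e^{tX}$, obtaining
\[
\Pr[X \geq \nu] \;=\; \Pr[e^{tX} \geq e^{t\nu}] \;\leq\; e^{-t\nu}\,\E[e^{tX}].
\]

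Next I would exploit independence of the $X_i$ to factorize the moment generating function as $\E[e^{tX}] = \prod_{i=1}^n \E[e^{tX_i}] = \prod_{i=1}^n \bigl(1 + p_i(e^t - 1)\bigr)$, and then collapse the product via the elementary inequality $1 + x \leq e^x$ into $\exp\bigl(\mu(e^t - 1)\bigr)$. This eliminates the individual $p_i$'s, so the bound depends on them only through their sum $\mu$, giving $\Pr[X \geq \nu] \leq \exp\bigl(\mu(e^t - 1) - t\nu\bigr)$.

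Then I would optimize the exponent in $t$. Its derivative $\mu e^t - \nu$ vanishes at $t = \ln(\nu/\mu)$, which is positive because $\nu \geq 6\mu > \mu$. Substituting this canonical choice yields
\[
\Pr[X \geq \nu] \;\leq\; e^{\nu - \mu}(\mu/\nu)^\nu \;\leq\; (e\mu/\nu)^\nu.
\]

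Finally, the hypothesis $\nu \geq 6\mu$ forces $e\mu/\nu \leq e/6 < 1/2$, so the right-hand side is at most $(1/2)^\nu = 2^{-\nu}$, as desired. There is no real obstacle; the only quantitative check is that the constant $6$ exceeds $2e \approx 5.44$, which is precisely what pushes the base of the exponential below $1/2$ and makes the clean form $2^{-\nu}$ possible. The argument follows the standard exposition in Mitzenmacher and Upfal.
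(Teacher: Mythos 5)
Your derivation is correct and is exactly the standard Cramér--Chernoff argument from Mitzenmacher and Upfal, which the paper simply cites without reproducing. The one small point worth noting is that the optimizing choice $t=\ln(\nu/\mu)$ implicitly assumes $\mu>0$; if $\mu=0$ then $X=0$ almost surely and the bound holds trivially, so the lemma is unaffected. Your quantitative check is the right one: $\nu\geq 6\mu$ gives $e\mu/\nu\leq e/6<1/2$ because $2e\approx5.44<6$.
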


\section{The Generalized Load Balancing Problem}\label{section:normnorm}

In this section, we study \glb and prove \cref{theorem:main:normnorm}.
For clarity of presentation, we have not optimized the constant factors.

\subsection{The Configuration LP}\label{section:main:lp}

Our algorithm is based on rounding polynomial many configuration LPs. 
In each configuration LP, instead of using the natural $x_{ij}$ variables to indicate the job assignment, e.g., \cite{lenstra1990approximation,shmoys1993approximation}, 
we create variables $x_{i,J}\in[0,1]$ for each $i\in\machines$ and $J\subseteq\jobs$, indicating whether $J$ is the set of jobs assigned to $i$.
Similar configuration variables have been used in other allocation/assignment optimization problems \cite{asadpour2012santa,asadpour2010approximation,bansal2006santa,feige2008allocations,jansen2017configuration,svensson2012santa}.


Fix an unknown optimal assignment $\sigma^\star:\jobs\rightarrow\machines$. Let $\vco=\load(\sigma^\star)\in\Rpos^m$ be the optimal load vector and $\opt=\outnorm(\vco)$ be the optimal objective thereof.
We need to guess certain characteristics of the optimal solution (we will show there are polynomial many possible guesses), and for each guess we write a configuration LP \ref{lp:normnorm:primal}. Here $\thresholds,\lambda,\tau$ are the parameters we need to adjust, and we explain them later.

We first explain constraints \eqref{lp:normnorm:primal1}, which can be used to bound the $\outnorm$-norm. 
Using \cref{lemma:majorization}, to obtain an $O(\log n)$-approximate solution, it suffices to obtain a load vector $\load(\sigma)\in\Rpos^m$ such that $\topl{k}{\load(\sigma)}$ is bounded by $O(\log n)\cdot\topl{k}{\vco}$ for each $k\in[m]$. 
Rather than bounding the top-$k$ norms for each $k$, we use the trick developed in \cite{chakrabarty2019approximation,ibrahimpur2021minimum}, focusing on a subset of geometrically-placed indexes of $[m]$, e.g., all integer powers of 2 smaller than $m$. Let $\pos\subseteq[m]$ be this index subset (which we define formally later). Our LP seeks a fractional solution that (roughly) has $\topp{k}$ norm at most $\topl{k}{\vco}$ for each $k\in\pos$.

\begin{align*} 
	\text{min} && 0\tag{$\primal(\thresholds,\lambda,\tau)$}\label{lp:normnorm:primal}\\
	\text{s.t.} && k\rho_k+\sum_{i\in\machines,J\subseteq\jobs}\left(h(\innorm_i(J)/\tau)-\rho_k\right)^+x_{i,J}&\leq\topl{k}{\rexp}\quad\forall k\in\pos
	\tag{$\primal.1$}\label{lp:normnorm:primal1}\\
	&& \sum_{J\subseteq\jobs}x_{i,J} &\leq1 \quad\forall i\in\machines\tag{$\primal.2$}\label{lp:normnorm:primal2}\\
	&& \sum_{i\in\machines,J\ni j}x_{i,J} &\geq \lambda \quad\forall j\in\jobs\tag{$\primal.3$}\label{lp:normnorm:primal3}\\
	&& \sum_{(i,J):\innorm_i(J)>\tau}x_{i,J}&\leq0\tag{$\primal.4$}\label{lp:normnorm:primal4}\\
	&& \sum_{i\in\machines,J\subseteq\jobs}x_{i,J}&\leq n\tag{$\primal.5$}\label{lp:normnorm:primal5}\\
	&& x&\geq0.
\end{align*}%

To write linear constraints on such $\topp{k}$ norms, we utilize \cref{claim:topl} and work with close estimates of $\vco^\da_k,\,k\in\pos$, denoted by vector $\thresholds=\{\rho_k\}_{k\in\pos}\in\Rpos^\pos$ (formally defined later). 
With a slight abuse of notation, denote $\innorm_i(J)=\innorm_i(\boldp_{i}\angles{J})$ as the $\innorm_i$-norm of assigning job-set $J$ to machine $i$. 
We use $k\rho_k+\sum_{i\in\machines,J\subseteq\jobs}(\innorm_i(J)-\rho_k)^+x_{i,J}$ to represent the $\topp{k}$ norm of LP solution $x$. 
We also use a certain ``expansion'' $\rexp\in\Rpos^m$ of $\{\rho_k\}_{k\in\pos}$ (proposed in \cite{ibrahimpur2021minimum}) as an upper bound vector, as shown in constraint \eqref{lp:normnorm:primal1}.

We need another important ingredient, that is to tighten the above-mentioned $\topp{k}$ norm constraints by rounding up each $\innorm_i(J)$ to the nearest value in $\thresholds$. Let $h:\Rpos\rightarrow\Rpos$ be this round-up function determined by $\thresholds$ (See \cref{figure:p:h}). 
Though this may create an unbounded gap between $h(\innorm_i(J))$ and $\innorm_i(J)$, the new LP is still feasible under suitable parameters.
This rounding-up process is technically useful in our final analysis of randomized rounding.

The configuration LP is parameterized by $\thresholds$ (therefore $\rexp$ and $h$ are also determined), a constant $0\leq \lambda\leq 1$ and a parameter $\tau\geq 1$. 
As explained above, the constraints \eqref{lp:normnorm:primal1} are for bounding $\topp{k}$ norms. Note that each $\innorm_i(J)$ is scaled by $\tau^{-1}\leq1$, due to our approximate procedure of solving the LP (see \cref{section:main:rounding}).
\eqref{lp:normnorm:primal2} says each machine can be selected to an extent of at most 1.
\eqref{lp:normnorm:primal3} says each job needs to be assigned to an extent of at least $\lambda\leq1$, relaxed again because of our approximate procedure. 
In \eqref{lp:normnorm:primal4}, each assignment $(i,J)$ that incurs a cost larger than $\tau$ is set to 0 (similar to the parametric pruning trick used in classic unrelated machine scheduling problems \cite{lenstra1990approximation,shmoys1993approximation}).
In \eqref{lp:normnorm:primal5}, the total extent of assignment over all possible $(i,J)$ is at most $n$, because there are only $n$ jobs. This constraint, together with \eqref{lp:normnorm:primal2}, further limits the support size of vertex solutions, which will be crucial for optimizing our approximation ratio. 
We also have the obvious constraints $x\geq0$.

\subparagraph*{The guessing step.}
We need some additional notations.
Let $\pos=\{\min\{2^s,m\}:s\in\Zpos\}$. We have $1,m\in\pos$, $\pos\subseteq[m]$ and $|\pos|\leq\log_2m+2$. For each $k\in[m]$, define $\nextp{k}$ as the smallest number in $\pos$ that is larger than $k$, and $\nextp{m}=m+1$; $\prevp{k}$ as the largest number in $\pos$ that is smaller than $k$, and $\prevp{1}=0$. 

Recall $\sigma^\star$ is an optimal assignment.
We guess the machine $i^\star$ with the largest $\load_{i^\star}(\sigma^\star)$, and the job $j^\star=\argmax_{\sigma^\star(j)=i^\star}p_{i^\star j}$
(i.e., $j^\star$ is the job assigned to machine $i^\star$ with largest processing time).
Assume w.l.o.g. that $\innorm_{i^\star}(\{j^\star\})=1/n$. 
Since the norms are monotone, one has $\load_{i}(\sigma^\star)\leq \innorm_{i^\star}(p_{i^\star j^\star}\cdot\mathbf{1}_n)\leq n\cdot \innorm_{i^\star}(p_{i^\star j^\star}\cdot e_{j^\star})=n\cdot \innorm_{i^\star}(\{j^\star\})=1$ for each $i\in\machines$, 
where $e_{j^\star}$ is the zero vector except that the $j^\star$-th coordinate is $1$.
Thus, we can assume $\vco\in[0,1]^m$ in the following discussion. In the remainder of this section, we assume $i^\star$ and $j^\star$ are correctly guessed (it is easy to see there are only polynomial number of possible guesses).

We now guess a \emph{non-increasing} vector $\thresholds=\{\rho_k\}_{k\in\pos}\in\Rpos^\pos$, where
\begin{enumerate}[label=(\roman*)]
	\item each $\rho_k$ is an integer power of 2 in $[1/(2mn),1]$,
	\item there are at most $\log_2 n+4$ distinct entries in $\thresholds$.
\end{enumerate}
The length is $|\pos|=O(\log m)$ and the number of possible values is the number of integer powers of 2 in $[1/(2mn),1]$, which is $O(\log (mn))$. 
The number of such possible non-increasing vectors is at most the number of natural number (i.e., $\mathbb{Z}_{\geq0}$) solutions to the simple equation $n_1+n_2+\cdots+n_{O(\log (mn))}=O(\log m)$, thus $\binom{O(\log (mn))}{O(\log m)}\leq2^{O(\log (mn))}=(mn)^{O(1)}$. 

We do not make any additional assumptions on $\thresholds$ for now.
For each $\thresholds$, we define a \emph{non-increasing} vector $\rexp\in\Rpos^m$ as its expansion (similar to \cite{ibrahimpur2021minimum}), and a \emph{non-decreasing} function $h:\Rpos\rightarrow\Rpos$ satisfying $h(x)\geq x$ for $x\geq0$ (see \cref{figure:p:h} for an example), where
\begin{equation*}
\rexpsub{k}=\left\{\begin{array}{ll}
\rho_k & k\in\pos\\
\rho_{\prevp{k}} & k\notin\pos,
\end{array}
\right.
\quad
h(x)=\left\{\begin{array}{cl}
\min\{t\in \thresholds:t\geq x\}	 & x\leq\max\{t\in \thresholds\}\\
1 & x\in(\max\{t\in \thresholds\},1] \\
x & x>1.
\end{array}
\right.
\end{equation*}

\begin{figure}[t]
    \begin{subfigure}[t]{0.5\textwidth}
    \centering
    \includegraphics[width=\textwidth]{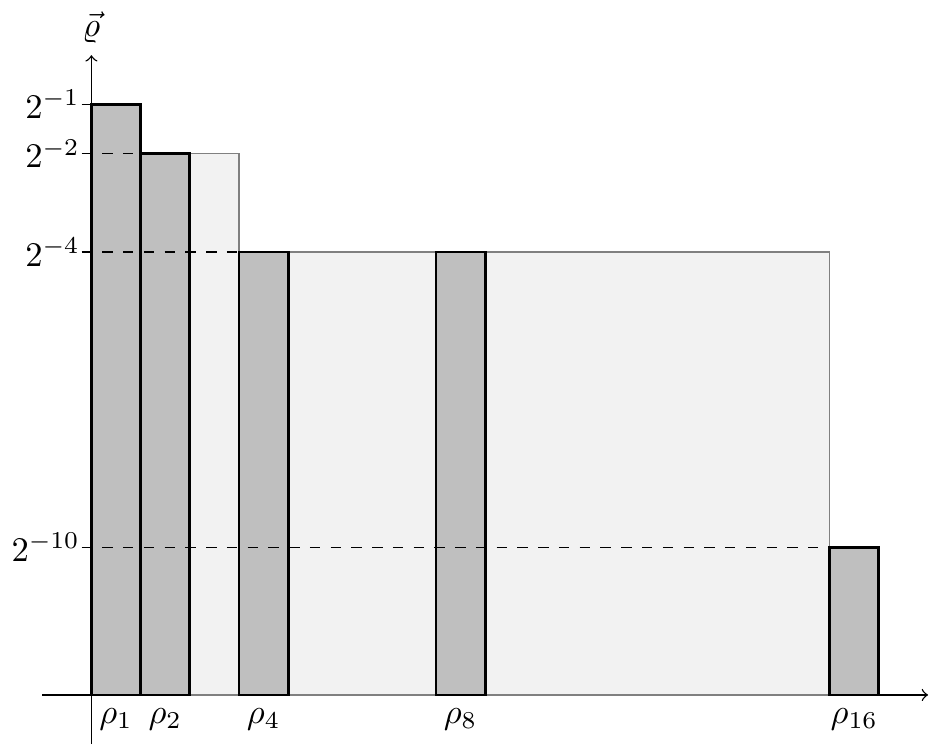}
    \subcaption{$\rexp\in\Rpos^{16}$, where the entries are non-increasing and represented using rectangles with unit width.
    }
    \label{subfigure:p}
    \end{subfigure}
    \hfill
    \begin{subfigure}[t]{0.42\textwidth}
    \centering
    \includegraphics[width=\textwidth]{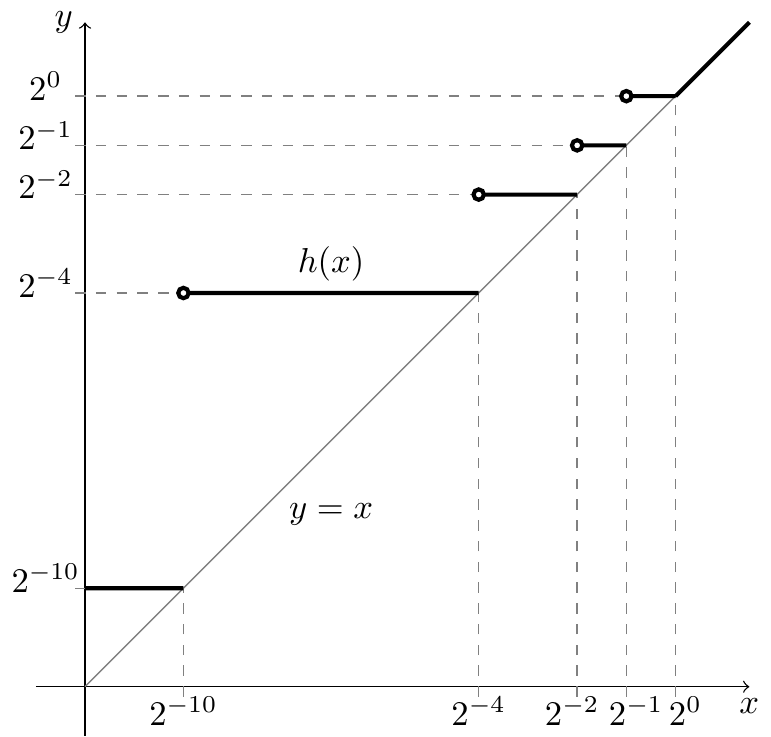}
    \subcaption{The round-up function $h$. The axes are both in logarithmic scale.}
    \label{subfigure:h}
    \end{subfigure}
    \caption{An example when $m=16$ and $\pos=\{1,2,4,8,16\}$. Here our guess is $\thresholds=(2^{-1},2^{-2},2^{-4},2^{-4},2^{-10})$, which induces its expansion $\rexp$ and the corresponding round-up function $h$.
    }
    \label{figure:p:h}
\end{figure}

\subparagraph*{The dual program.}
The number of variables in \ref{lp:normnorm:primal} is exponential but the number of constraints is polynomial, thus we consider its dual as follows, and try to solve it using the ellipsoid method.
The dual variables are $\{r_k\}_{k\in\pos},\{y_i\}_{i\in\machines},\{z_j\}_{j\in\jobs},s,t$.
\begin{align*}
	\text{max}\; & -\sum_{k\in\pos}\left(\topl{k}{\rexp}-k\rho_k\right)r_k -\sum_{i\in\machines}y_i+
	\lambda\sum_{j\in\jobs}z_j-nt
	\tag{$\dual(\thresholds,\lambda,\tau)$}\label{lp:normnorm:dual}\\
	\text{s.t.}\; & \sum_{j\in J}z_j-y_i-\sum_{k\in\pos}\left(h(\innorm_i(J)/\tau)-\rho_k\right)^+r_k\leq
	s\cdot\mathbf{1}[\innorm_i(J)>\tau]+t \quad\forall i\in\machines,J\subseteq\jobs\\
	& r,s,t,y,z \geq0.
\end{align*}	

It is unclear how to separate the dual constraints exactly. We transform \ref{lp:normnorm:dual} into another feasibility problem that is easier to deal with.
We observe the following: if \ref{lp:normnorm:primal} is feasible, the optimum of \ref{lp:normnorm:dual} is also zero. 
Because the dual program has a trivial zero solution and the constraints are scale-invariant (i.e., if $(r,s,t,y,z)$ is feasible, $(cr,cs,ct,cy,cz)$ is feasible for any $c\geq0$), it either has optimum zero or is unbounded. 
Thus \ref{lp:normnorm:primal} is feasible if and only if \ref{lp:normnorm:dual} is bounded, which is equivalent to the following polytope being empty.
\begin{align}
	&\Big\{(r,s,t,y,z)\geq0\,\big|\tag{$\anotherdual(\thresholds,\lambda,\tau)$}\label{lp:q}\\
	&-\sum_{k\in\pos}\left(\topl{k}{\rexp}-k\rho_k\right)r_k-\sum_{i\in\machines}y_i+
	\lambda\sum_{j\in\jobs}z_j-nt\geq1;\notag\\
	&-y_i\leq s\cdot\mathbf{1}[\innorm_i(J)>\tau]+t+\sum_{k\in\pos}
	\left(h(\innorm_i(J)/\tau)-\rho_k\right)^+r_k-\sum_{j\in J}z_j,\,\forall i\in\machines,J\subseteq\jobs\Big\}.\notag
\end{align}

\begin{observation}\label{observation:q}
	\ref{lp:normnorm:primal} is feasible if and only if \ref{lp:q} is empty.
\end{observation}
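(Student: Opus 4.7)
The plan is to invoke linear programming duality together with scale invariance of the dual constraints, essentially as the excerpt already hints at.

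First, I would observe that \ref{lp:normnorm:primal} has objective identically $0$, so it is a pure feasibility problem whose optimum is $0$ when feasible and $+\infty$ when infeasible. The dual \ref{lp:normnorm:dual} admits the trivial all-zeros solution $(r,s,t,y,z)=0$, which is feasible with objective $0$. Moreover, both the dual constraints and the dual objective are \emph{positively homogeneous} in $(r,s,t,y,z)$: if $(r,s,t,y,z)$ is feasible then so is $(cr,cs,ct,cy,cz)$ for every $c \geq 0$, and the objective scales by the same factor $c$. Consequently, the dual optimum is either $0$ or $+\infty$, and the latter occurs precisely when there exists a feasible dual solution with \emph{strictly positive} objective value.

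Next I would apply the strong duality theorem. Because the dual is always feasible (thanks to the zero solution), the LP duality theorem gives the dichotomy: either \ref{lp:normnorm:primal} is feasible and both primal and dual optima equal $0$, or \ref{lp:normnorm:primal} is infeasible and the dual is unbounded. Combined with the previous paragraph, this yields the equivalence
\begin{equation*}
\text{\ref{lp:normnorm:primal} is feasible} \quad \Longleftrightarrow \quad \text{no dual-feasible $(r,s,t,y,z)$ has strictly positive objective.}
\end{equation*}

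Finally, I would translate the right-hand side into the emptiness of \ref{lp:q}. If \ref{lp:q} contains some point $(r,s,t,y,z)$, then this point is dual-feasible with objective $\geq 1 > 0$, so the dual is unbounded and \ref{lp:normnorm:primal} is infeasible. Conversely, if some dual-feasible $(r,s,t,y,z)$ has objective $v > 0$, then by rescaling with $c = 1/v$ we obtain a dual-feasible solution whose objective equals $1$, hence lies in \ref{lp:q}, making \ref{lp:q} nonempty. This completes the equivalence claimed in \cref{observation:q}.

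There is no real obstacle here beyond bookkeeping; the only subtle point is to verify positive homogeneity of every dual constraint (the indicator $\mathbf{1}[\innorm_i(J)>\tau]$ is a constant, so the right-hand side of each dual constraint is linear in the variables, as is the left-hand side), which in turn guarantees that the dual optimum cannot take any finite positive value and thus justifies the scaling argument.
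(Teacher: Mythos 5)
Your proof is correct and takes essentially the same route as the paper's: the paper, in the paragraph preceding the observation, invokes precisely this combination of facts — the zero primal objective, the trivial zero dual solution, scale-invariance of the dual constraints/objective, and LP duality — to conclude that the primal is feasible iff the dual is bounded iff the polytope $\anotherdual$ is empty. You have merely spelled out the rescaling step ($c = 1/v$) and the duality dichotomy in more detail.
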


\subsection{Rounding and Analysis}\label{section:main:rounding}

Next, we employ a round-or-cut argument, similar to \cite{anegg2020technique}. Intuitively speaking, we run the ellipsoid algorithm on a particular \ref{lp:q}. 
In each iteration, we either find a separating hyperplane, or directly certify that another $\anotherdual(\thresholds,\lambda',\tau')$ is non-empty, which implies that $\primal(\thresholds,\lambda',\tau')$ is infeasible by \cref{observation:q}. 
We then pick $\thresholds$ properly so that the latter cannot happen, therefore certify in polynomial time that \ref{lp:q} is indeed empty. This in turn helps us to efficiently compute a feasible solution to \ref{lp:normnorm:primal}.

For clarity of presentation, we focus on two sets of parameters in the remainder of this section, $(\lambda,\tau)\in\{(1/2,3/2),(1,1)\}$. We emphasize that this only affects the constant factors in the final analysis. We need the following core lemma on an approximate separation oracle for $\anotherdual(\thresholds,1/2,3/2)$, and defer its proof to \cref{section:oracle}. The rest of this section is dedicated to the proof of our main theorem.

\begin{lemma}\label{lemma:approx:separation}
Fix $\thresholds$. There exists a polynomial time algorithm that, given $(r,s,t,y,z)\geq0$, either outputs a violated constraint in $\anotherdual(\thresholds,1/2,3/2)$, or certifies that $\anotherdual(\thresholds,1,1)$ is non-empty.
\end{lemma}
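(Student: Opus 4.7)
The plan is to reduce the approximate separation of $\anotherdual(\thresholds,1/2,3/2)$ to polynomially many calls of the PTAS for \normval. The single aggregate (first) inequality is linear in the given variables and can be checked directly; if violated it is output. Since $z\geq 0$, satisfaction of this inequality with $\lambda=1/2$ immediately implies the weaker $\lambda=1$ version in $\anotherdual(\thresholds,1,1)$, so it is handled automatically.

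For the exponentially many per-$(i,J)$ constraints I would fix a machine $i$ and search for a violating $J\subseteq\jobs$. The key observation is that, for any $J$ with $\innorm_i(J)\leq\tau=3/2$, the round-up $h(\innorm_i(J)/\tau)$ lies in $\thresholds\cup\{1\}$, a set of only $O(\log n)$ elements. Guessing this value as $\theta$, a violator must satisfy $\innorm_i(J)\leq\tau\theta$, and the search reduces to the \normval instance $\max_J \sum_{j\in J}z_j$ subject to $\innorm_i(J)\leq\tau\theta$. Invoking the PTAS with $\epsilon=1/2$ yields $J^*$ with value at least the true optimum and $\innorm_i(J^*)\leq(1+\epsilon)\tau\theta$. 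Crucially, $(1+\epsilon)=3/2$ matches $\tau$, forcing $\innorm_i(J^*)/\tau\leq\theta$, so $h(\innorm_i(J^*)/\tau)\leq\theta$ and the violation check on $J^*$ is clean. The complementary range $\innorm_i(J)>\tau$, where $h$ is the identity, is treated analogously by enumerating $\theta$ over a polynomial-size geometric grid covering the relevant norms.

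If no PTAS call produces a violating $J^*$, I would certify that the candidate $(r,s,t,y,z)$ itself lies in $\anotherdual(\thresholds,1,1)$. For any $J$ with $\innorm_i(J)\leq 1$, setting $\theta=h(\innorm_i(J))\in\thresholds\cup\{1\}$, the non-violation of $J^*_{i,\theta}$ combined with $h(\innorm_i(J^*_{i,\theta})/\tau)\leq\theta$ bounds $\sum_{j\in J}z_j$ by exactly the right-hand side of the $\anotherdual(\thresholds,1,1)$ constraint at $(i,J)$. For $J$ with $\innorm_i(J)>1$, a brief case analysis splitting on whether $\innorm_i(J)\leq 3/2$ or $\innorm_i(J)>3/2$ shows that the $\anotherdual(\thresholds,1/2,3/2)$ per-$(i,J)$ constraint at the same $J$, once satisfied, implies the $\anotherdual(\thresholds,1,1)$ constraint, using the extra $s$-slack appearing in the $(1,1)$ Case~2 together with the monotonicity of $h$ and the inequality $\tau=3/2\geq 1$.

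The main obstacle is coordinating three slacks so they line up exactly: the PTAS's $(1+\epsilon)=3/2$ norm relaxation, the $\tau$ gap between $3/2$ and $1$, and the $\lambda$ gap between $1/2$ and $1$. The parameter choices $\epsilon=1/2$, $\tau=3/2$, and $\lambda=1/2$ are tightly calibrated so that the PTAS approximation, the geometric-grid coverage gap in the large-norm case, and the first-constraint strengthening all fold into this single parameter change without any further loss.
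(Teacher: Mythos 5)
Your high-level strategy (reduce separation to polynomially many \normval calls, then certify membership in the relaxed polytope) is the right one, but the execution has several genuine gaps.

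\textbf{Direction of the subproblem.} The \normval problem as defined and solved in the paper is ``minimize $\innorm(\boldp\angles{J})$ subject to $\sum_{j\in J}z_j\geq Z$''. Your subproblem is the opposite direction: ``maximize $\sum_{j\in J}z_j$ subject to $\innorm_i(J)\leq\tau\theta$''. The paper's PTAS does not solve this. One can try to simulate it by searching over $Z$, but the attainable values of $\sum_{j\in J}z_j$ form an exponential set, so the paper instead fixes the largest $z_j$ in $J$ and enumerates $Z$ over a geometric grid $\{z_j,2z_j,4z_j,\ldots\}$. That grid introduces a factor-of-$2$ loss in the value guarantee; your proposal assumes an exact value guarantee, which this reduction cannot deliver.

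\textbf{Arithmetic.} You state that the PTAS with $\epsilon=1/2$ returns $J^*$ with $\innorm_i(J^*)\leq(1+\epsilon)\tau\theta$ and infer $\innorm_i(J^*)/\tau\leq\theta$. But $(1+\epsilon)\tau\theta/\tau=(1+\epsilon)\theta=(3/2)\theta$, not $\theta$, so $h(\innorm_i(J^*)/\tau)$ can exceed $\theta$. The clean chain $h(\innorm_i(J')/\tau)\leq h(\innorm_i(J))$ in the paper works precisely because the PTAS is applied to \emph{minimize} the norm: it returns $J'$ with $\innorm_i(J')\leq\frac{3}{2}\innorm_i(J)$, hence $\innorm_i(J')/\tau\leq\innorm_i(J)$, and monotonicity of $h$ finishes it. In your formulation this cancellation is lost.

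\textbf{Certification.} You claim the \emph{same} candidate $(r,s,t,y,z)$ lies in $\anotherdual(\thresholds,1,1)$. The per-$(i,J)$ constraints there subtract the full $\sum_{j\in J}z_j$, so this would require your subroutine to have an exact value guarantee, which (per the first point) it does not. The paper instead certifies that $(r,s,t,y,z/2)$ is feasible: halving $z$ is exactly what absorbs the factor-of-$2$ from the $Z$-grid, and it also converts the $\lambda=1/2$ first constraint into the $\lambda=1$ one. Without this scaling, the non-violation of the PTAS outputs does not bound $\sum_{j\in J}z_j$ tightly enough. Finally, your case analysis for $\innorm_i(J)>1$ presumes the $\anotherdual(\thresholds,1/2,3/2)$ constraint at $J$ is satisfied, but the oracle only checked PTAS outputs, not $J$ itself; that premise must itself be derived from the non-violation of a representative $J'$, exactly as the paper does uniformly for every $J$ regardless of norm range.
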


\begin{proof}[Proof of \cref{theorem:main:normnorm}]
We enumerate all possible estimates $\thresholds$. 
Consider the choice of $\thresholds=\{\rho_k\}_{k\in\pos}$ such that the following holds,
\begin{enumerate}[label=(\roman*)]
	\item $\rho_k\in[\vco^\da_k,2\vco^\da_k)$ for $k\in\pos$ s.t. $\vco^\da_k\geq\vco^\da_1/(2m)$,
	\item $\rho_k=2^{\ceil{\log_2(\vco^\da_1/(2m))}}\in[\vco^\da_1/(2m),\vco^\da_1/m]$  for $k\in\pos$ s.t. $\vco^\da_k<\vco^\da_1/(2m)$.
\end{enumerate} 

Notice that such a vector is uniquely determined by $\vco$. We denote it by $\thresholds^\star=\{\rho_k^\star\}_{k\in\pos}$. 
We show that our exhaustive search is guaranteed to run procedures on $\thresholds^\star$. 
First, according to our initial guesses, one has $\vco^\da_1\geq \innorm_{i^\star}(\{j^\star\})=1/n$, $\vco^\da_1\leq1$, therefore (recall that each number in $\thresholds^\star$ is an integer power of 2) $\rho_1^\star\leq1$, $\rho_k^\star\geq\vco^\da_1/(2m)\geq1/(2mn)$ for each $k\in\pos$, all falling into our guessing range $[1/(2mn),1]$. Then, there are at most $n$ non-zero entries in $\vco$, since the number of jobs is $n$. Hence, for $k\in\pos$ and $k>n$, one has $\vco^\da_k=0$, and the number of distinct entries indexed by $\pos\cap[n]$ is at most $\log_2 n+3$. This implies that the number of distinct entries in $\thresholds^\star$ is at most $\log_2 n+4$. Suppose $\thresholds=\thresholds^\star$ in what follows.

For a starting point $(r,s,t,y,z)$, we repeatedly call the separation oracle in \cref{lemma:approx:separation} and use the ellipsoid algorithm to modify the solution. 
If the oracle always returns a violated constraint as a separating hyperplane, we eventually obtain that, for a polynomial-sized subset $\mcal{H}\subseteq\machines\times 2^{\jobs}$, the following polytope is empty,
\begin{align}
	&\Big\{(r,s,t,y,z)\geq0\,\big|\tag{$\anotherdual_{\mcal{H}}(\thresholds,1/2,3/2)$}\label{lp:qh}\\
	&-\sum_{k\in\pos}\left(\topl{k}{\rexp}-k\rho_k\right)r_k-\sum_{i\in\machines}y_i+\frac{1}{2}\sum_{j\in\jobs}z_j-nt\geq1;\notag\\
	&-y_i\leq s\cdot\mathbf{1}[\innorm_i(J)>3/2]+t+\sum_{k\in\pos}
	\left(h(2\innorm_i(J)/3)-\rho_k\right)^+r_k-\sum_{j\in J}z_j,\,\forall (i,J)\in\mcal{H}\Big\}.\notag
\end{align}
Following a similar argument as \cref{observation:q}, by removing constraints in $\dual(\thresholds,1/2,3/2)$ that are indexed by $\machines\times2^\jobs\setminus\mcal{H}$ and obtaining a polynomial-sized LP $\dual_{\mcal{H}}(\thresholds,1/2,3/2)$, the latter is bounded. Thus, its dual $\primal_{\mcal{H}}(\thresholds,1/2,3/2)$ is feasible with a polynomial number of variables indexed by $\mcal{H}$. 
We directly solve $\primal_{\mcal{H}}(\thresholds,1/2,3/2)$ and obtain a \emph{vertex solution} $\hat x$.
It is easy to see that $\hat x$ is feasible to $\primal(\thresholds,1/2,3/2)$, because $\primal_{\mcal{H}}(\thresholds,1/2,3/2)$ can be obtained from $\primal(\thresholds,1/2,3/2)$ by eliminating all variables that are NOT indexed by $\mcal{H}$.

We look at the tight constraints at $\hat x$ among
\eqref{lp:normnorm:primal2} and \eqref{lp:normnorm:primal5}.
If $m\leq n$, the number is at most $m+1$; if $m>n$, the number of tight ones in \eqref{lp:normnorm:primal2} is at most $n$, otherwise \eqref{lp:normnorm:primal5} is violated.
Hence there are at most $\min\{m,n\}+1$ tight constraints among them at $\hat x$. 
Since there are at most $\log_2n+4$ distinct entries in $\thresholds$ hence \eqref{lp:normnorm:primal1}, the total number of non-trivial tight constraints at $\hat x$ is at most $3n+5$, 
and the support size of $\hat x$ is at most $3n+5$ (see, e.g., \cite{lau2011iterative}). Denote the support of $\hat x$ by $\mcal{\hat H}\subseteq\mcal{H}$.

Another possibility is that, in some iteration of the ellipsoid algorithm, the oracle certifies that $\anotherdual(\thresholds,1,1)$ is non-empty, and $\primal(\thresholds,1,1)$ is infeasible due to \cref{observation:q}. 
We show that this is impossible given the choice of $\thresholds=\thresholds^\star$. 
Indeed, we first notice the following.
\begin{enumerate}[label=(\roman*)]
	\item for $k\in\pos$, since $\rho_k\in \thresholds$ and $\rho_k\geq\vco^\da_k$, we have $\rexpsub{k}=\rho_k\geq h(\vco^\da_k)$,
	\item $\rexpsub{k}=\rho_{\prevp{k}}\geq h(\vco^\da_{\prevp{k}})\geq h(\vco^\da_k)$ when $k\notin\pos$,
\end{enumerate} 
thus $\rexp$ and $h(\vco^\da)$ (apply $h$ element-wise) are both non-increasing and $\rexp\geq h(\vco^\da)$.
From the optimal assignment $\sigma^\star:\jobs\rightarrow\machines$, we let $x_{i,J}^\star=1$ if $J=\{j:\sigma^\star(j)=i\}$ and 0 otherwise. 
Fix $k\in\pos$ in \eqref{lp:normnorm:primal1} of $\primal(\thresholds,1,1)$. 
The LHS at $x^\star$ is
\(
k\rho_k+\sum_{s\in[m]}(h(\vco^\da_s)-\rho_k)^+
\leq k\rho_k+\sum_{s\in[m]}\left(\rexpsub{s}-\rho_k\right)^+
=\topl{k}{\rexp},
\)
via \cref{claim:topl}, the definition of $\rexp$, and that $\rexp\geq h(\vco^\da)$ is non-increasing.
Other constraints in $\primal(\thresholds,1,1)$ are easily satisfied by $x^\star$.
Therefore $\primal(\thresholds,1,1)$ must be feasible, a contradiction.

In conclusion, when $\thresholds=\thresholds^\star$, we obtain a feasible solution $\hat x$ to $\primal(\thresholds,1/2,3/2)$ with support size $|\mcal{\hat H}|\leq 3n+5$. We proceed to use randomized rounding to obtain a feasible assignment of jobs.
Let $\mcal{I}\leftarrow\emptyset$ and $T=\ceil{6\ln n}$. For each $t=1,\dots,T$ and each $(i,J)\in \mcal{\hat H}$, set $\mcal{I}\leftarrow \mcal{I}\cup\{(i,J)\}$ independently with probability $\hat x_{i,J}$. 

\subparagraph{The Machine Duplicates.} For each $i\in\machines$, we have $\sum_{J:(i,J)\in\mcal{\hat H}}\hat x_{i,J}\leq 1$ using the constraint \eqref{lp:normnorm:primal2}.
Therefore for the number of appearances of $i$ in $\mcal{I}$, one has
\begin{equation*}
	\E[|\{J:(i,J)\in \mcal{I}\}|]=\sum_{J:(i,J)\in\mcal{\hat H}}1-(1-\hat x_{i,J})^{T}\leq T\sum_{J:(i,J)\in\mcal{\hat H}}\hat x_{i,J}\leq T,
\end{equation*}
using Bernoulli's inequality. Hence, using Chernoff bound in \cref{lemma:chernoff}, on random variables $\mathbf{1}[(i,J)\in\mcal{I}],\,(i,J)\in\mcal{\hat H}$, one has
\begin{equation}
\Pr[|\{J:(i,J)\in \mcal{I}\}|> 6T]
=\Pr\left[\sum_{J:(i,J)\in\mcal{\hat H}}\mathbf{1}[(i,J)\in\mcal{I}]> 6T\right]
\leq 2^{-6T}\leq1/(n^{24}).\label{eqn:repeat:machine}
\end{equation}
Note that since $|\mcal{\hat H}|\leq 3n+5$, at most $3n+5$ machines can possibly have a non-zero number of appearances in $\mcal{I}$.
For any other machine $i$, we always have $|\{J:(i,J)\in \mcal{I}\}|=0$. 

\subparagraph{The Coverage of Jobs.} For each $j\in\jobs$, the probability of it \emph{not} being contained in any selected $J$ in $\mcal{I}$ is
\begin{align}
\prod_{(i,J)\in\mcal{\hat H}:J\ni j}(1-\hat x_{i,J})^T & \leq\prod_{(i,J)\in\mcal{\hat H}:J\ni j}\exp(-T\hat x_{i,J})\notag\\
&=\exp\Big(-T\sum_{(i,J)\in\mcal{\hat H}:J\ni j}\hat x_{i,J}\Big)\leq\exp(-T/2)\leq1/n^3,\label{eqn:cover:job}
\end{align}
where we use \eqref{lp:normnorm:primal3} and obtain $\sum_{(i,J)\in\mcal{\hat H}:J\ni j}\hat x_{i,J}\geq1/2$ in the penultimate inequality.

\subparagraph{The Norm Constraints.} According to \eqref{lp:normnorm:primal4}, $\hat x_{i,J}>0$ implies $\innorm_i(J)\leq3/2$ thus $h(2\innorm_i(J)/3)\leq h(1)=1$.
Define subsets $\mcal{\hat H}_t=\{(i,J)\in\mcal{\hat H}:h(2\innorm_i(J)/3)=2^t\}$ where $t=0$ or $2^t$ is an entry in $\thresholds$. 
Let the subsets be indexed by $t\in\pwr\subseteq\mathbb{Z}_{\leq0}$, then $|\pwr|\leq\log_2 n+5$ since there are at most $\log_2 n+4$ distinct entries in $\thresholds$, and $\bigcup_{t\in\pwr}\mcal{\hat H}_t=\mcal{\hat H}$ because $h(2\innorm_i(J)/3)\leq1$ for each $(i,J)\in\mcal{\hat H}$. 
For each $t\in\pwr$, we define a random variable $Y_t=|\mcal{\hat H}_t\cap \mcal{I}|$ and obtain
\[
\E[Y_t]=\sum_{(i,J)\in\mcal{\hat H}_t}1-(1-\hat x_{i,J})^T\leq T\sum_{(i,J)\in\mcal{\hat H}_t}\hat x_{i,J},
\] 
via Bernoulli's inequality. Using Markov's inequality, with probability at least 1/2, one has $Y_t\leq 2\E[Y_t]$. Since these (at most) $\log_2 n+5$ random variables $\{Y_t:t\in\pwr\}$ are independent, there is a probability at least $1/(32 n)$ that $Y_t\leq 2\E[Y_t]$ for all $t\in\pwr$. Suppose this happens. Then for each $k\in\pos$, because $\hat x$ satisfies \eqref{lp:normnorm:primal1}, we have
\begin{align}
	&\sum_{(i,J)\in \mcal{I}}(h(2\innorm_i(J)/3)-\rho_k)^+
	=\sum_{t\in\pwr}\sum_{(i,J)\in \mcal{I}\cap\mcal{\hat H}_t}(2^t-\rho_k)^+=\sum_{t\in\pwr}Y_t\cdot (2^t-\rho_k)^+\notag\\
	&\leq \sum_{t\in\pwr}2\E[Y_t](2^t-\rho_k)^+\leq 2T\sum_{t\in\pwr}\sum_{(i,J)\in\mcal{\hat H}_t}\hat x_{i,J}(2^t-\rho_k)^+\leq 2T\left(\topl{k}{\rexp}-k\rho_k\right).\label{eqn:norm:constraints}
\end{align}

\subparagraph{Putting it all together.}
Using \eqref{eqn:repeat:machine}\eqref{eqn:cover:job}\eqref{eqn:norm:constraints} and the union bound, for \emph{large enough} $n$, with probability at least $1/(32n)-(3n+5)/(n^{24})-1/n^2\geq 1/(64n)$, the following facts hold:  
\begin{enumerate}[label=(\alph*)]
	\item\label{item:machine:duplicate} for each $i\in\machines$, $|\{J:(i,J)\in \mcal{I}\}|\leq 6T\leq 38\ln n$,
	\item\label{item:job:coverage} for each $j\in\jobs$, $\exists (i,J)\in \mcal{I}$ s.t. $j\in J$,
	\item\label{item:norm:constraints} for each $k\in\pos$, $\sum_{(i,J)\in \mcal{I}}\left(h(2\innorm_i(J)/3)-\rho_k\right)^+\leq 14\ln n\left(\topl{k}{\rexp}-k\rho_k\right)$.
\end{enumerate}

We repeat the randomized rounding $64n$ times, and boost the success probability to at least $1-(1-1/(64n))^{64n}\geq1-e^{-1}\geq0.6$.
Suppose \ref{item:machine:duplicate}\ref{item:job:coverage}\ref{item:norm:constraints} hold in the rest of this section.
Next, we merge all configurations that are identified with the same machine in $\mcal{I}$, that is, $J_i\leftarrow\bigcup_{J:(i,J)\in \mcal{I}}J$ and $\bigcup_{i\in\machines}J_i=\jobs$ according to \ref{item:job:coverage}. 
We only merge $O(\log n)$ configurations for each machine using \ref{item:machine:duplicate}. 
We also have the following subadditivity 
\[\innorm_i(J_1)+\innorm_i(J_2)
\geq \innorm_i(\boldp_i\angles{J_1}+\boldp_i\angles{J_2})
\geq \innorm_i(\boldp_i\angles{J_1\cup J_2})
=\innorm_i(J_1\cup J_2),
\]
from the triangle inequality on the norm $\innorm_i$, and $s^++t^+\geq(s+t)^+$ for any $s,t\in\mathbb{R}$. Since $h(x)\geq x$ for all $x\geq0$, one has for each $k\in\pos$,
\begin{align*}
	&\sum_{i\in\machines}\left(2\innorm_i(J_i)/3-38\ln n\cdot\rho_k\right)^+\leq_{\ref{item:machine:duplicate}}\sum_{i\in\machines}\left(\sum_{J:(i,J)\in\mcal{I}}2\innorm_i(J)/3-|\{J:(i,J)\in\mcal{I}\}|\cdot\rho_k
	\right)^+\\
	&\leq\sum_{i\in\machines}\sum_{J:(i,J)\in\mcal{I}}\left(h(2\innorm_i(J)/3)-\rho_k\right)^+\leq_{\ref{item:norm:constraints}} 14\ln n\left(\topl{k}{\rexp}-k\rho_k\right).
\end{align*}

Define a vector $\vcv\in\Rpos^m$ where $\vcv_i=\innorm_i(J_i)$ for each $i\in\machines$. For each $k\in\pos$, via \cref{claim:topl}, we have
\begin{align*}
\topl{k}{\vcv}&\leq k\cdot (57\ln n\cdot\rho_k)+\sum_{i\in\machines}(\innorm_i(J_i)-57\ln n\cdot\rho_k)^+\\
&\leq 57\ln n\cdot k\rho_k+21\ln n\left(\topl{k}{\rexp}-k\rho_k\right)\leq 57\ln n\cdot\topl{k}{\rexp}.
\end{align*}

Then for $k\notin\pos$, one has $k<\nextp{k}<2k$ and thus
\[
\topl{k}{\vcv}\leq\topl{\nextp{k}}{\vcv}
\leq 57\ln n\cdot\topl{\nextp{k}}{\rexp}\leq 114\ln n\cdot\topl{k}{\rexp},
\]
and the two inequalities above show that $\outnorm(\vcv)\leq O(\log n)\cdot\outnorm(\rexp)$ using \cref{lemma:majorization}. By our initial assumption that $\thresholds=\thresholds^\star$, we have $\rexpsub{k}\leq2\vco^\da_k+\vco^\da_1/m$ for each $k\in\pos$, and we now compare $\outnorm(\rexp)$ and $\opt=\outnorm(\vco)$. First for $k\in\pos$, because $\rexp$ is non-increasing, we have
\begin{align*}
\topl{k}{\rexp}&=\sum_{s<k,s\in\pos}\rexpsub{s}+\sum_{s<k,s\notin \pos}\rexpsub{s}+\rexpsub{k}\\
&\leq2\sum_{s<k,s\in\pos}\vco^\da_{s}+2\sum_{s<k,s\notin \pos}\vco^\da_{\prevp{s}}+2\vco^\da_k+k\cdot\frac{\vco^\da_1}{m}\\
&\leq\vco^\da_1+2\vco^\da_k+2\sum_{s<k,s\in\pos}(\nextp{s}-s)\vco^\da_s\\
&\leq\vco^\da_1+2\vco^\da_k+2\vco^\da_1+2\sum_{1<s<k,s\in\pos}2(s-\prevp{s})\vco^\da_s\\
&\leq4\sum_{s\leq k,s\in\pos}(s-\prevp{s})\vco^\da_s\leq4\sum_{s'\leq k}\vco^\da_{s'}=4\topl{k}{\vco},
\end{align*}
where we use that $\vco^\da$ is non-increasing, and $\nextp{k}-k\leq 2(k-\prevp{k})$ for each $k\in\pos$, by definition of $\pos$. Then for $k\notin\pos$, likewise we obtain
\[
\topl{k}{\rexp}\leq\topl{\nextp{k}}{\rexp}\leq4\topl{\nextp{k}}{\vco}\leq8\topl{k}{\vco},
\]
whence it follows that $\outnorm(\rexp)\leq8\outnorm(\vco)$ by \cref{lemma:majorization} again, and $\outnorm(\vcv)\leq O(\log n)\cdot\outnorm(\vco)$. Since the norms are monotone and $\bigcup_{i\in\machines}J_i=\jobs$, any assignment induced by these subsets reveals the same approximation ratio. That is, one can easily obtain a job assignment $\sigma:\jobs\rightarrow\machines$ such that $\sigma^{-1}(i)\subseteq J_i$ for each $i\in\machines$, thus $\boldp_i\assigned{\sigma}\leq\boldp_i\angles{J_i}$ and
\[
\makespan(\sigma)
=\outnorm(\{\innorm_i(\boldp_i\assigned{\sigma})\}_{i\in\machines})
\leq \outnorm(\{\innorm_i(\boldp_i\angles{J_i})\}_{i\in\machines})
=\outnorm(\vcv)\leq O(\log n)\cdot \opt.\qedhere
\]
\end{proof}

\section{The Approximate Separation Oracle}\label{section:oracle}

In this section, we provide details of the claimed separation oracle in \cref{lemma:approx:separation}. As a useful subroutine, we consider the following norm minimization with a linear constraint problem and obtain a PTAS, which may be of independent interest.

\begin{definition}
	(\normval). Given a symmetric monotone norm $\innorm:\R^n\rightarrow\Rpos$, two non-negative $n$-dimensional vectors $\boldp=\{p_j\}_{j\in[n]},\,\boldz=\{z_j\}_{j\in[n]}$, and a non-negative real number $Z$, the goal is to find $J\subseteq[n]$ such that $\sum_{j\in J}z_j\geq Z$, so as to minimize $\innorm(\boldp\angles{J})$.\end{definition}

\begin{theorem}\label{theorem:norm:value}
There exists a polynomial time approximation scheme for \emph{\normval}.
\end{theorem}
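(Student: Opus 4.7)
The plan is to design a PTAS for \normval by reducing it to a polynomial family of knapsack-style subproblems, leveraging the top-$k$ representation of symmetric monotone norms from \cref{claim:topl} and the majorization principle of \cref{lemma:majorization}.

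First I would normalize so that $\innorm(e_1) = 1$ (WLOG by scaling the norm) and enumerate a guess $M$ for $M^\star := \innorm(\boldp\angles{J^\star})$ on a $(1+\epsilon)$-geometric grid; by monotonicity and symmetry, $M^\star$ lies in a polynomial range of $\max_j p_j$, so there are polynomially many candidates. For each $M$, discard items with $p_j > M$ (which cannot lie in $J^\star$, since otherwise $\innorm(\boldp\angles{J^\star}) \geq p_j \innorm(e_j) > M$), and include every item with $p_j \leq \epsilon M/n$ in the solution for free; the latter inflates the norm by at most $\innorm(\epsilon M \mathbf{1}_n / n) \leq (\epsilon M / n) \innorm(\mathbf{1}_n) \leq \epsilon M$, using triangle inequality and $\innorm(e_j)=1$, which is within the $(1+\epsilon)$ budget. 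The surviving items then have $p_j$ in a polynomially bounded range.

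By \cref{lemma:majorization}, it suffices to output $J$ with $\sum_{j \in J} z_j \geq Z$ and $\topl{k}{\boldp\angles{J}} \leq (1+\epsilon)\topl{k}{\boldp\angles{J^\star}}$ for every $k \in [n]$. Since $\topl{k}{\boldp\angles{J^\star}} = k t^\star_k + \sum_{j \in J^\star}(p_j - t^\star_k)^+$ with $t^\star_k = (\boldp\angles{J^\star})^\da_k \in \{p_j\}_{j \in [n]}$ by \cref{claim:topl}, I would enumerate over all pairs $(k, t)$ with $k \in [n]$ and $t$ in a polynomial candidate set of thresholds (the distinct $p_j$ values plus $(1+\epsilon)$-spaced values in the post-preprocessing range), and for each pair solve the single-constraint minimum-cost knapsack
\begin{equation*}
\min\Bigl\{\,k t + \sum_{j \in J}(p_j - t)^+ \;:\; \sum_{j \in J} z_j \geq Z\,\Bigr\}
\end{equation*}
via the standard FPTAS for knapsack (obtained by bucketing $z_j$ and running a DP). Output the $J(k, t)$ minimizing $\innorm(\boldp\angles{J(k, t)})$ across all guesses.

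The main obstacle is to argue that some single $(k, t)$ yields a $J$ whose $\topl{k'}{\boldp\angles{J}}$ is controlled by $(1+\epsilon)\topl{k'}{\boldp\angles{J^\star}}$ for every $k'$, not merely for $k'=k$. For the matching case this is immediate: taking $(k, t^\star_k)$ gives $\topl{k}{\boldp\angles{J(k,t^\star_k)}} \leq k t^\star_k + \sum_{j \in J(k,t^\star_k)}(p_j - t^\star_k)^+ \leq (1+\epsilon)\topl{k}{\boldp\angles{J^\star}}$. Extending this to all $k'$ simultaneously, I expect to need either (i) an exchange argument showing that choosing $(k, t)$ at the ``knee'' of the sorted optimum vector automatically dominates every other top-$k'$ norm, or (ii) a richer multi-constraint knapsack with $O(\log n)$ side constraints indexed by the geometric position set $\pos$ of the main algorithm, solved via LP relaxation plus vertex rounding; a vertex has only $O(\log n)$ fractional coordinates that can be resolved by polynomial-size enumeration, and any small multiplicative violation of a constraint is absorbed by the $(1+\epsilon)$ slack available through \cref{lemma:majorization}. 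This final simultaneous-control step is where I anticipate the bulk of the technical work.
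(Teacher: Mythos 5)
You correctly reduce the goal, via \cref{lemma:majorization}, to producing a set $J$ with $\sum_{j\in J}z_j\geq Z$ whose top-$k$ norms are within $(1+\epsilon)$ of those of $J^\star$ for all $k$, and you honestly flag that the simultaneous control over all $k$ is ``where the bulk of the technical work'' lies. But that flagged step is exactly where the proposal has a genuine gap, and neither of your two suggested patches closes it. Option (i) (an exchange argument from a single $(k,t)$) cannot work in general: minimizing $\topp{1}$ subject to the $z$-constraint and minimizing $\topp{n}$ subject to the same constraint can force structurally incompatible solutions (one prefers a few large items, the other prefers many small ones), so no single $(k,t)$-knapsack solution dominates every $\topp{k'}$. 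Option (ii) (LP over the $O(\log n)$ positions of $\pos$, then vertex rounding) moves in the paper's direction, but the claim that ``any small multiplicative violation ... is absorbed by the $(1+\epsilon)$ slack'' does not hold as stated: a vertex can have a fractional coordinate $x_j=\delta$ on an item whose $(h(p_j)-\rho_k)^+$ contribution is the \emph{entire} slack $\topl{k}{\rexp}-k\rho_k$ at some $k$; rounding it up multiplies that contribution by $1/\delta$ (an unbounded factor), while rounding it down may destroy the $z$-constraint, which has no slack.

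The missing ingredient that the paper supplies is a second, \emph{cardinality-profile} guess on top of the threshold guess $\thresholds$: it partitions items into $O(\log_{1+\epsilon}n)$ value classes by the rounded size $h(p_j)$, and for each class guesses whether $J^\star$ contains more than $\ceil{1/\epsilon}$ items of that class (in which case the class goes to $C_{-1}$) or exactly some $t\in\{0,\dots,\ceil{1/\epsilon}\}$ items. This adds only $(1/\epsilon)^{O(\log_{1+\epsilon}n)}=n^{O(\epsilon^{-1}\log(1/\epsilon))}$ guesses, and, crucially, it lets the LP carry per-class \emph{equality} constraints (\eqref{lp:norm:value3}) for small classes and lower bounds (\eqref{lp:norm:value4}) for large ones. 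The rounding is then purely deterministic and class-local: within each class pick the top items by $z_j$, exactly $t$ for small classes (zero cardinality loss) and $\ceil{\sum_j\bar x_j}$ for large ones (at most a $(2+\ceil{1/\epsilon})/(1+\ceil{1/\epsilon})\leq 1+\epsilon$ multiplicative overcount). Because all items in a class share the same $h(p_j)$, this overcount translates directly into at most a $(1+\epsilon)$ factor on every $\topp{k}$ bound simultaneously, with no vertex-sparsity argument needed. Without this class-wise cardinality guess (or an equivalent device), I do not see how to make either of your two options yield a PTAS.
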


We first show the separation oracle in \cref{section:proof:separation:oracle}, then prove \cref{theorem:norm:value} in \cref{section:normval}.

\subsection{Proof of \cref{lemma:approx:separation}}
\label{section:proof:separation:oracle}

Fix the solution $(r,s,t,y,z)$. If it violates the first constraint in $\anotherdual(\thresholds,1/2,3/2)$, we directly choose it as a separating hyperplane. If not, we try to approximately verify the second set of constraints, which is the crux of this lemma. 

We notice that, for each such constraint, the LHS is determined by $i$, and the RHS is determined by $\innorm_i(J)$ and $\sum_{j\in J}z_j$. Therefore, we consider a surrogate optimization problem as follows: for each $Z\geq0$, we want to check whether there exist $i\in\machines$, $J\subseteq\jobs$ such that $\sum_{j\in J}z_j\geq Z$, $\innorm_i(J)$ is minimized, and the corresponding constraint in $\anotherdual(\thresholds,1/2,3/2)$ is violated. Unfortunately, there are an exponential number of possible values for $Z$, so we can only consider a small subset of approximate values.

Fix $i\in\machines$. We enumerate the largest variable $z_j$ in the chosen set $j\in J$, and consider minimizing $\innorm_i(J)$ restricted to jobs $\jobs_{\leq j}:=\{j'\in\jobs:z_{j'}\leq z_j\}$. 
After we fix $z_j$, the possible values of $\sum_{j\in J}z_j$ fall in the interval $[z_j,nz_j]$, so we enumerate $Z\in\{z_j,2z_j,4z_j,\dots,2^s z_j\}$, where $s\in\mathbb{Z}$ is the smallest integer such that $2^s\geq n$. The problem now becomes to minimize $\innorm_i(J)$ subject to $\sum_{j\in J}z_j\geq Z$ and $J\subseteq \jobs_{\leq j}$.

This is an instance of \normval. By \cref{theorem:norm:value}, for each sub-problem above, if it is feasible w.r.t. $Z$, in polynomial time we obtain $J$ such that $\sum_{j\in J}z_j\geq Z$ and $\innorm_i(J)$ is at most $3/2$ times the optimum.
If, among the output solutions, there exist $i,J$ that violate the said constraint in $\anotherdual(\thresholds,1/2,3/2)$, we output the corresponding separating hyperplane; otherwise, though the optimization above is approximate, we still certify the following,
\begin{equation}
	-y_i\leq s\cdot\mathbf{1}[\innorm_i(J)>1]+t+\sum_{k\in\pos}\left(h(\innorm_i(J))-\rho_k\right)^+r_k-\frac{1}{2}\sum_{j\in J}z_j,\,\forall i\in\machines,J\subseteq\jobs.
	\tag{*}\label{eqn:certified}
\end{equation}

To see this, for the sake of contradiction, suppose there exist $i,J$ that violate \eqref{eqn:certified}. When we enumerate the correct $j=\argmax_{j'\in J}\{z_{j'}\}$ and $Z$ such that $\sum_{j\in J}z_j\in[Z,2Z)$, since $J$ is a feasible solution in this case, according to our supposition, we obtain $J'$ such that $\sum_{j\in J'}z_j\geq Z$ and $\innorm_i(J')\leq 3\innorm_i(J)/2$. But using the assumption of violating \eqref{eqn:certified}, one has
\begin{align*}
	& s\cdot\mathbf{1}[\innorm_i(J')>3/2]+t+
	\sum_{k\in\pos}\left(h\left(\frac{\innorm_i(J')}{3/2}\right)-\rho_k\right)^+r_k-\sum_{j\in J'}z_j\\
	&\leq s\cdot\mathbf{1}[\innorm_i(J)>1]+t+\sum_{k\in\pos}\left(h(\innorm_i(J))-\rho_k\right)^+r_k-Z\\
	&\leq s\cdot\mathbf{1}[\innorm_i(J)>1]+t+\sum_{k\in\pos}\left(h(\innorm_i(J))-\rho_k\right)^+r_k-\frac{1}{2}\sum_{j\in J}z_j<-y_i,
\end{align*}
violating this constraint in $\anotherdual(\thresholds,1/2,3/2)$, and we should output this hyperplane defined by $(i,J')$ in the first place, which is a contradiction.

Let $(r',s',t',y',z')=(r,s,t,y,z/2)$. Since by now the first constraint in $\anotherdual(\thresholds,1/2,3/2)$ must be satisfied by $(r,s,t,y,z)$, using \eqref{eqn:certified}, it is easy to verify that $(r',s',t',y',z')$ satisfies all the constraints in $\anotherdual(\thresholds,1,1)$, hence it is non-empty.\qed


\subsection{Norm Minimization with a Linear Constraint}
\label{section:normval}

W.l.o.g., assume $\boldp>0$, otherwise the problem can be easily reduced to such an instance. 
We fix a small constant $\epsilon\in(0,1/2]$ and round each $p_j$ \emph{up} to its nearest integer power of $1+\epsilon$ and solve the new instance. This causes us to lose a factor of at most $1+\epsilon$ in the objective since $\innorm$ is a monotone norm.
Fix an unknown optimal solution $J^\star$ to the \emph{modified instance} with $\vco=\boldp\angles{J^\star}$ and $\innorm(\vco)\leq(1+\epsilon)\opt$, where $\opt\geq0$ is the optimum in the \emph{original instance}. 

To obtain a solution $J$ with objective $\innorm(\boldp\angles{J})$ bounded by $(1+O(\epsilon))\innorm(\vco)$, we use the same technique as in \cref{section:normnorm} and consider the $\topp{k}$ norms of $\boldp\angles{J}$.
As before, we start with some guessing procedures similar to \cref{section:main:lp}, but with several subtle differences for obtaining a PTAS. 
Note that we abuse some of the notations in \cref{section:main:lp}
for convenience. They play very similar roles, but may be defined slightly differently from those in \cref{section:normnorm}.

\subparagraph*{Guessing the optimum.}
We guess the largest $p_{j_1},j_1\in J^\star$, the largest $z_{j_2},j_2\in J^\star$, and suppose w.l.o.g. that $\vco^\da_1=p_{j_1}=1$, $z_{j_2}=1$. 
We assume $j_1,j_2$ are correctly guessed in the sequel (it is easy to see there are only polynomial number of possible choices). 
	
Define indexes $\pos\subseteq[n]$ iteratively as follows \cite{ibrahimpur2021minimum}:
Set $\pos\leftarrow\{1\}$. Whenever $n\notin\pos$, choose the current largest $t\in\pos$ and add $\ceil{(1+\epsilon)t}$ to $\pos$ but no larger than $n$, i.e., \[\pos\leftarrow\pos\cup\{\min\{n,\ceil{(1+\epsilon)\max\{t\in\pos\}}\}\},\] 
and it is easy to verify $|\pos|=O(\log_{1+\epsilon}n)$. 
For each $k\in[n]$, define $\nextp{k}$ as the smallest number in $\pos$ that is larger than $k$, and $\nextp{n}=n+1$; $\prevp{k}$ as the largest number in $\pos$ that is smaller than $k$, and $\prevp{1}=0$. Using $\pos$, we guess the following. 
	\begin{enumerate}[label=(\arabic*)]
		\item A non-increasing vector $\thresholds=\{\rho_k\}_{k\in\pos}\in\Rpos^\pos$ such that each entry is a \emph{non-positive} integer power of $(1+\epsilon)$, and
	\begin{enumerate}[label=(\roman*)]
		\item $\rho_k=\vco^\da_k$ for $k\in\pos$ s.t. $\vco^\da_k\geq\epsilon/n$,
		\item $\rho_k=(1+\epsilon)^{\ceil{\log_{1+\epsilon}(\epsilon/n)}}$ for $k\in\pos$ s.t. $\vco^\da_k<\epsilon/n$.
	\end{enumerate} 
	Since $|\pos|=O(\log_{1+\epsilon} n)$ and the number of possible values is $O(\log_{1+\epsilon}(n/\epsilon))$, the number of such non-increasing vectors is $\binom{O(\log_{1+\epsilon}(n/\epsilon))}{O(\log_{1+\epsilon}n)}\leq 2^{O(\log_{1+\epsilon} (n/\epsilon))}=(n/\epsilon)^{O(1/\epsilon)}$, using a basic counting method as before. 
	We use exhaustive search and assume $\thresholds$ is correct w.r.t. $\vco^\da$ in what follows. Define a non-decreasing function $h:\Rpos\rightarrow\Rpos$ where
	\begin{equation*}
	h(x)=\left\{\begin{array}{cl}
		\min\{t\in \thresholds:t\geq x\}	 & x\leq\max\{t\in \thresholds\}\\
		x & x>\max\{t\in \thresholds\}.
	\end{array}
	\right.
	\end{equation*}
	We notice that $h(x)\geq x$ holds for any $x\geq0$, and $h(x)\leq1$ for each $x\leq1$ since $\rho_1=\vco^\da_1=1$ is the largest entry in $\thresholds$ according to our guesses.
	\item We start with $\pwr\leftarrow\emptyset$. Given $\thresholds$ and $h$, for each $s\in\mathbb{Z}$ such that $(1+\epsilon)^s\in \thresholds$, $\pwr\leftarrow\pwr\cup\{s\}$. We have $\pwr\subseteq\mathbb{Z}_{\leq0}$ since $\vco^\da_1=1$ and thus $\thresholds\subseteq[0,1]^\pos$. 
	
	Define a partition $\mcal{C}=\{C_t:t=-1,0,\dots,\ceil{1/\epsilon}\}$ of $\pwr$ as follows: For each $s\in\pwr$, guess whether the \emph{number} of indexes in $J^\star$ (i.e., the optimum) such that $h(p_j)=(1+\epsilon)^s$ is $>\ceil{1/\epsilon}$. 
	If so, $C_{-1}\leftarrow C_{-1}\cup\{s\}$; otherwise, guess $0\leq t\leq\ceil{1/\epsilon}$ as this number and $C_t\leftarrow C_t\cup\{s\}$. Since $|\pwr|\leq|\pos|=O(\log_{1+\epsilon} n)$ and there are $\ceil{1/\epsilon}+2$ classes, the number of possible partitions is at most $(1/\epsilon)^{O(\log_{1+\epsilon}n)}=n^{O(\epsilon^{-1}\log(1/\epsilon))}$. 
	We use exhaustive search and assume $\mcal{C}$ is correct w.r.t. $J^\star$ in what follows.
	\end{enumerate}
	
We use $\thresholds$ to construct an entry-wise upper bound for $\vco$, using which we write linear constraints and bound the $\topp{k}$ norms of an LP solution, $k\in\pos$;
$h$ is a round-up function that naturally classifies $[n]$ into $O(\log_{1+\epsilon}n)$ classes according to $h(p_j),j\in[n]$.
Crucially, we use $\mcal{C}$ to guess the number of \emph{selected indexes} of each class in the optimum $J^\star$, and only attempt to exactly match the numbers for those classes with a \emph{selected cardinality} $\leq\ceil{1/\epsilon}$. 
That is, if a class contains $>\ceil{1/\epsilon}$ indexes in $J^\star$, it is added to $C_{-1}$. This ``meta-classification'' of classes is useful in our analysis of the approximation factor for \emph{deterministic} rounding.

Similar as before, we define a non-increasing expansion vector $\rexp\in\Rpos^n$, where
$\rexpsub{k}=\rho_k$ if $k\in\pos$, and $\rexpsub{k}=\rho_{\prevp{k}}$ if $k\notin\pos$. 
Since $\rho_k\geq\vco^\da_k$ for each $k\in\pos$ and $\rho_k\in \thresholds$, it follows that $\rexpsub{k}=\rho_k\geq h(\vco^\da_k)$ when $k\in\pos$, $\rexpsub{k}=\rho_{\prevp{k}}\geq h(\vco^\da_{\prevp{k}}) \geq h(\vco^\da_k)$ for each $k\notin\pos$ and thus $\rexp\geq h(\vco^\da)$. We need the following result. The lemma is implied by Lemma~2.7 in \cite{ibrahimpur2021minimum}, and we provide the proof in \cref{section:lemma:threshold:proof} for completeness. 
\begin{lemma}\label{lemma:threshold} 
\emph{(Lemma~2.7, \cite{ibrahimpur2021minimum}).}
	$\innorm(\rexp)\leq(1+11\epsilon)\innorm(\vco)$.
\end{lemma}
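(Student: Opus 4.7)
The plan is to invoke Lemma~\ref{lemma:majorization}, reducing the goal to the pointwise top-$k$ inequality $\topl{k}{\rexp}\leq(1+11\epsilon)\topl{k}{\vco}$ for every $k\in[n]$ (using $\topl{k}{\vco}=\topl{k}{\vco^\da}$ since $\vco\geq 0$). A preliminary observation to use throughout is that the definition of $\thresholds$ gives $\rho_s\leq\vco^\da_s+\epsilon(1+\epsilon)/n$ for every $s\in\pos$, with the additive slack only mattering on the tail where $\vco^\da_s<\epsilon/n$.

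I would first handle $k\in\pos$. Since $\rexp$ is constant (equal to $\rho_s$) on each block $[s,\nextp{s}-1]$, we get the clean decomposition
\[
\topl{k}{\rexp}\;=\;\rho_k+\sum_{s\in\pos,\,s<k}(\nextp{s}-s)\rho_s.
\]
The key geometric fact to isolate is that for $s>1$ in $\pos$, the definitions $\nextp{s}=\lceil(1+\epsilon)s\rceil$ and $s=\lceil(1+\epsilon)\prevp{s}\rceil$ imply $\nextp{s}-s\leq\epsilon s+1$ and $s-\prevp{s}\geq(\epsilon s+1)/(1+\epsilon)$, hence
\[
\nextp{s}-s\;\leq\;(1+\epsilon)(s-\prevp{s}).
\]
Combined with $\vco^\da_i\geq\vco^\da_s$ for $i\leq s$, this lets me charge the $\rexp$-block at $s$ to the $\vco^\da$-positions $[\prevp{s}+1,s]$, yielding $(\nextp{s}-s)\rho_s\leq(1+\epsilon)\sum_{i=\prevp{s}+1}^{s}\vco^\da_i+(\text{small error})$. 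Because the intervals $[\prevp{s}+1,s]$ for $s\in\pos\cap(1,\prevp{k}]$ partition $[2,\prevp{k}]$ and the $s=1$ term contributes exactly $\vco^\da_1$, summation telescopes to $(1+\epsilon)\topl{k}{\vco^\da}$ plus an $O(\epsilon)$ residual, which is absorbed into $O(\epsilon)\topl{k}{\vco^\da}$ via the normalization $\topl{k}{\vco^\da}\geq\vco^\da_1=1$.

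For $k\notin\pos$, I would decompose $\topl{k}{\rexp}=\topl{\prevp{k}}{\rexp}+(k-\prevp{k})\rho_{\prevp{k}}$; the first term is handled by the previous case. For the residual, $k<\nextp{\prevp{k}}$ gives $k-\prevp{k}\leq\epsilon\,\prevp{k}$, and using $\rho_{\prevp{k}}\leq\vco^\da_{\prevp{k}}+O(\epsilon/n)$ together with the non-increasing property $\prevp{k}\,\vco^\da_{\prevp{k}}\leq\topl{\prevp{k}}{\vco^\da}\leq\topl{k}{\vco^\da}$, the second term is bounded by $O(\epsilon)\topl{k}{\vco^\da}$.

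The main obstacle, and the source of the entire improvement over constant factors, is establishing the tight geometric ratio $\nextp{s}-s\leq(1+\epsilon)(s-\prevp{s})$; without it, the decomposition only yields a constant-factor bound as in the main proof of Theorem~\ref{theorem:main:normnorm} rather than the $(1+O(\epsilon))$ factor needed here. Once this ratio is in hand, the remaining work is routine bookkeeping of constants, which easily fits into the stated guarantee $1+11\epsilon$ for $\epsilon\leq 1/2$.
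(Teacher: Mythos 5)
Your proposed proof hinges on the term-by-term inequality
\[
\nextp{s}-s\;\leq\;(1+\epsilon)\,(s-\prevp{s}),
\]
which you derive from the claim $s-\prevp{s}\geq(\epsilon s+1)/(1+\epsilon)$. That claim is false. From $s=\lceil(1+\epsilon)\prevp{s}\rceil$ one only gets $s\geq(1+\epsilon)\prevp{s}$, hence $s-\prevp{s}\geq\epsilon s/(1+\epsilon)$, which is strictly weaker than what you assert by an additive $1/(1+\epsilon)$. A concrete counterexample: take $\epsilon=0.1$, so $\pos=\{1,2,\dots,10,11,13,15,\dots\}$; at $s=11$ we have $\nextp{s}-s=2$ and $s-\prevp{s}=1$, so the ratio is $2$, far above $1+\epsilon=1.1$. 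The ratio of consecutive gaps in $\pos$ can be as large as roughly $2$ in general (it equals $1$ while the indices increment by one, then jumps when $\lceil(1+\epsilon)s\rceil$ first skips an integer), so no term-by-term charging of the block at $s$ against the positions $[\prevp{s}+1,s]$ can yield the $(1+O(\epsilon))$ factor.

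The paper sidesteps exactly this obstruction. Rather than compare gaps, it applies Abel summation to rewrite $\sum_{s<k,\,s\in\pos}(\nextp{s}-s)\vco^\da_s+\vco^\da_k$ as $k\vco^\da_k+\sum_{s<k,\,s\in\pos}(\nextp{s}-1)(\vco^\da_s-\vco^\da_{\nextp{s}})$, applies the comparison of \emph{absolute positions} $\nextp{s}-1\leq(1+\epsilon)s$ (which is genuinely true, since $\nextp{s}<(1+\epsilon)s+1$), and then Abel-sums back to obtain $(1+\epsilon)\sum_{s\leq k,\,s\in\pos}(s-\prevp{s})\vco^\da_s\leq(1+\epsilon)\topl{k}{\vco}$. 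The point is that the $(1+\epsilon)$ bound holds for the \emph{aggregate} because the gap-sums telescope, even though it fails term by term; your direct charging step is precisely the step the summation-by-parts trick is designed to avoid. Your handling of the $k\notin\pos$ case and the additive $O(\epsilon/n)$ slack is in line with the paper, but the main step needs the rearrangement argument (or an equivalent reorganization of the sum), not the pointwise gap inequality.
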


\subparagraph*{LP relaxation.}
The following relaxation uses the variable $x_{j}\geq0$ to represent the extent we include $j\in[n]$ in the solution. 
Similar to our main relaxation \ref{lp:normnorm:primal}, it attempts to bound the norm objective via \eqref{lp:norm:value1}, using $\topp{k}$ norms and \cref{lemma:majorization}. 
We also have constraints \eqref{lp:norm:value3}, \eqref{lp:norm:value4} that restrict the number of indexes selected in the classes $\{j\in[n]:h(p_j)=(1+\epsilon)^s\},\,s\in\pwr$, as mentioned above.
\begin{align}
	\text{min} && 0\tag{${\operatorname{NLin}}$}\label{lp:norm:value}\\
    \text{s.t.} && \sum_{j\in[n]}(h(p_{j})-\rho_k)^+x_{j}
    &\leq \topl{k}{\rexp}-k\rho_k && \forall k\in\pos
    \tag{$\operatorname{NLin.1}$}\label{lp:norm:value1}\\
    && \sum_{j\in[n]}z_jx_j &\geq Z
    \tag{$\operatorname{NLin.2}$}\label{lp:norm:value2}\\
    && \sum_{j:h(p_j)=(1+\epsilon)^s}x_j &=t &&\forall s\in C_t,0\leq t\leq\ceil{1/\epsilon}
    \tag{$\operatorname{NLin.3}$}\label{lp:norm:value3}\\
    && \sum_{j:h(p_j)=(1+\epsilon)^s}x_j &\geq1+\ceil{1/\epsilon} &&\forall s\in C_{-1}\tag{$\operatorname{NLin.4}$}\label{lp:norm:value4}\\
     && x_{j}&=0 && \forall j\text{ s.t. }p_j>1\text{ or }z_j>1
     \tag{$\operatorname{NLin.5}$}\label{lp:norm:value5}\\
     && x&\in[0,1]^{[n]}.\notag
\end{align}

\begin{claim}\label{lemma:lp:norm:value:feasible}
\ref{lp:norm:value} is feasible.
\end{claim}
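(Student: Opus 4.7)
The natural candidate is the integral indicator of the optimum: set $\hat x_j = 1$ if $j \in J^\star$ and $\hat x_j = 0$ otherwise. The plan is to verify each constraint directly, where most of the work has already been set up by the guessing step.

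First, \eqref{lp:norm:value5} holds because $j_1,j_2 \in J^\star$ were chosen to be the indices attaining the maxima of $p_j$ and $z_j$ over $J^\star$, and the normalization $p_{j_1} = \vco^\da_1 = 1$, $z_{j_2} = 1$ forces every $j \in J^\star$ to satisfy $p_j \leq 1$ and $z_j \leq 1$. Constraints \eqref{lp:norm:value3} and \eqref{lp:norm:value4} are immediate from the definition of the partition $\mcal{C}$: for $s \in C_t$ with $0 \leq t \leq \ceil{1/\epsilon}$, exactly $t$ indices $j \in J^\star$ have $h(p_j) = (1+\epsilon)^s$, while for $s \in C_{-1}$ there are strictly more than $\ceil{1/\epsilon}$ such indices, so at least $1 + \ceil{1/\epsilon}$. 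Constraint \eqref{lp:norm:value2} is nothing but feasibility of $J^\star$ in the original \normval instance, $\sum_{j \in J^\star} z_j \geq Z$.

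The only constraint requiring work is \eqref{lp:norm:value1}. Here the plan is to use \cref{claim:topl}, noting that $\rexpsub{k} = \rho_k$ for $k \in \pos$ and $\rexp$ is non-increasing, which gives
\[
\topl{k}{\rexp} - k\rho_k = \sum_{s \in [n]} (\rexpsub{s} - \rho_k)^+.
\]
For $\hat x$, the LHS of \eqref{lp:norm:value1} equals $\sum_{j \in J^\star}(h(p_j) - \rho_k)^+$, which, after sorting the entries of $\vco = \boldp\angles{J^\star}$ non-increasingly, is $\sum_{s=1}^{|J^\star|}(h(\vco^\da_s) - \rho_k)^+$. The excerpt already records the entry-wise domination $\rexp \geq h(\vco^\da)$ (coming from $\rho_k \geq \vco^\da_k$ for $k \in \pos$ and $\rho_{\prevp{k}} \geq h(\vco^\da_k)$ for $k \notin \pos$), so $(\rexpsub{s} - \rho_k)^+ \geq (h(\vco^\da_s) - \rho_k)^+$ for every $s$, and summing yields \eqref{lp:norm:value1}.

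I do not foresee a genuine obstacle: once the guessing has fixed $\thresholds$ to match $\vco^\da$ and $\mcal{C}$ to match the cardinality profile of $J^\star$, every constraint is tailored so that $\hat x$ is feasible, and the only ``computation'' is the standard top-$k$ identity combined with the pointwise bound $\rexp \geq h(\vco^\da)$.
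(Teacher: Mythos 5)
Your proposal is correct and mirrors the paper's argument: both take the indicator of $J^\star$, observe that all constraints except \eqref{lp:norm:value1} are built in by construction, and bound the left-hand side of \eqref{lp:norm:value1} via the pointwise domination $\rexp \geq h(\vco^\da)$ together with the top-$k$ identity from \cref{claim:topl} applied to the non-increasing vector $\rexp$ (using $\rexpsub{k}=\rho_k$ for $k\in\pos$). You spell out the ``easy'' constraints a bit more explicitly than the paper does, but the key step is the same.
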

\begin{proof}
Set $x_{j}^\star=1$ if $j\in J^\star$ and zero otherwise.
It is easy to see that all but the first set of constraints are satisfied by $x^\star$.
For \eqref{lp:norm:value1}, we fix $k\in\pos$. The LHS is
\[
\sum_{s\in[n]}(h(\vco^\da_s)-\rho_k)^+
\leq\sum_{s\in[n]}(\rexpsub{s}-\rho_k)^+
=\sum_{s\leq k}\left(\rexpsub{s}-\rho_k\right)^+
=\topl{k}{\rexp}-k\rho_k,
\]
where we use $h(\vco^\da)\leq\rexp$, $\rexpsub{k}=\rho_k$ and the fact that $\rexp$ is non-increasing.
\end{proof}

\begin{proof}[Proof of \cref{theorem:norm:value}]
Using \cref{lemma:lp:norm:value:feasible}, we obtain a feasible solution $\bar x$. Our rounding method is very simple. Let $J\leftarrow\emptyset$. For each $s\in\pwr$, define the partial support $J_s=\{j\in[n]:h(p_j)=(1+\epsilon)^s,\bar x_j>0\}$.
$\bigcup_{s\in\pwr}J_s$ is the entire support of $\bar x$ by \eqref{lp:norm:value5}, $\rho_1=\vco^\da_1=1$ and definition of $h$. 
We have the following cases.
\begin{enumerate}[label=(\roman*)]
	\item If $s\in C_t$ for some $0\leq t\leq\ceil{1/\epsilon}$, using \eqref{lp:norm:value3} one has $\sum_{j\in J_s}\bar x_j=t$ and thus $|J_s|\geq t$. 
	We choose $t$ indexes $J_s'\subseteq J_s$ that have the largest $z_j$ values and set $J\leftarrow J\cup J_s'$. It is easy to verify that $\sum_{j\in J_s'}z_j\geq\sum_{j\in J_s}z_j\bar x_j$.
	\item If $s\in C_{-1}$, using \eqref{lp:norm:value4} one has $\sum_{j\in J_s}\bar x_j\geq 1+\ceil{1/\epsilon}$ and $|J_s|\geq \ceil{\sum_{j\in J_s}\bar x_j}$. 
	We choose $\ceil{\sum_{j\in J_s}\bar x_j}$ indexes $J_s'\subseteq J_s$ that have the largest $z_j$ values and set $J\leftarrow J\cup J_s'$. 
	It is easy to verify that $|J_s'|/\sum_{j\in J_s}\bar x_j=\ceil{\sum_{j\in J_s}\bar x_j}/\sum_{j\in J_s}\bar x_j \leq (2+\ceil{1/\epsilon})/(1+\ceil{1/\epsilon})\leq 1+\epsilon$, and $\sum_{j\in J_s'}z_j\geq\sum_{j\in J_s} z_j\bar x_j$.
\end{enumerate}

Using \eqref{lp:norm:value2}, it immediately follows that 
\[\sum_{j\in J}z_j
=\sum_{s\in\pwr}\sum_{j\in J_s'}z_j\geq\sum_{s\in\pwr}\sum_{j\in J_s}z_j\bar x_j
=\sum_{j\in[n]}z_j\bar x_j\geq Z,
\]
showing that $J$ is a feasible solution. For the objective, we fix $k\in\pos$. One has
\begin{align*}
	\sum_{j\in J}(h(p_j)-\rho_k)^+
	&=\sum_{s\in\pwr}\sum_{j\in J_s'}(h(p_j)-\rho_k)^+=\sum_{s\in\pwr}|J_s'|((1+\epsilon)^s-\rho_k)^+\\
	&\leq\sum_{s\in\pwr}(1+\epsilon)\sum_{j\in J_s}((1+\epsilon)^s-\rho_k)^+\bar x_j\\
	&=(1+\epsilon)\sum_{j\in [n]}(h(p_j)-\rho_k)^+\bar x_j\leq(1+\epsilon)\left(\topl{k}{\rexp}-k\rho_k\right),
\end{align*}
where we use \eqref{lp:norm:value1} in the last inequality.
Using $h(x)\geq x$ and \cref{claim:topl}, for each $k\in\pos$,
\begin{align}
    \topl{k}{\boldp\angles{J}}&\leq k\rho_k+\sum_{s\in [n]}(\boldp\angles{J}_s-\rho_k)^+\leq k\rho_k+\sum_{j\in J}(h(p_{j})-\rho_k)^+\notag\\
    &\leq k\rho_k+(1+\epsilon)\left(\topl{k}{\rexp}-k\rho_k\right)\leq(1+\epsilon)\topl{k}{\rexp}.\label{eqn:major:1}
\end{align}

Then for $k\notin\pos$, by considering $\prevp{k}$ and the iterative definition of $\pos$, we have $(k-1)(1+\epsilon)>k$, thus $k>1+1/\epsilon$, and $\nextp{k}<1+(1+\epsilon)k\leq(1+2\epsilon)k$. Since $\boldp\angles{J}^\da$ and $\rexp$ are non-increasing, and $k<\nextp{k}<(1+2\epsilon)k$, using \eqref{eqn:major:1} we have
\begin{align}
    \topl{k}{\boldp\angles{J}}\leq\topl{\nextp{k}}{\boldp\angles{J}}
    \leq (1+\epsilon)\topl{\nextp{k}}{\rexp}
    \leq (1+5\epsilon)\topl{k}{\rexp},\label{eqn:major:2}
\end{align}
thus from \eqref{eqn:major:1}\eqref{eqn:major:2} and \cref{lemma:majorization}, \cref{lemma:threshold}, it follows that 
\[\innorm(\boldp\angles{J})\leq(1+5\epsilon)\innorm(\rexp)\leq(1+71\epsilon)\innorm(\vco)\leq(1+143\epsilon)\opt.
\]
Finally, it is easy to see that the running time of the algorithm is $(n/\epsilon)^{O(\epsilon^{-1}\log(1/\epsilon))}$, which is determined by our guessing procedure.
\end{proof}

\section{Conclusion and Future Directions}
\label{section:conclusion}

In this work, we systematically study the approximation algorithms for \glb with general 
inner and outer norms.
We propose a randomized polynomial time algorithm with logarithmic approximation factor, matching the lower bound up to constant. For certain special case that generalizes classic makespan minimization, we develop a constant factor approximation algorithm.

We propose some interesting future directions.
Note the $\Omega(\log n)$ lower bound in \cref{theorem:lower:bound:glb} does not necessarily hold for special cases with outer norm $\outnorm=\mcal{L}_\infty$ and arbitrary symmetric monotone inner norms.
Hence, there is a gap of $\log n$ between the known lower bound and our result (see \cref{figure:result:table}), which is an interesting open question.
Other natural inner and outer objectives are also worth studying, e.g., submodular set functions \cite{svitkina2011submodular,svitkina2010facility}, piece-wise linear functions \cite{li2011generalized}, etc.

\bibliography{references}

\appendix
\section{Proof of \cref{lemma:threshold}}
\label{section:lemma:threshold:proof}
	By our guesses, we have $\rho_k=\vco^\da_k$ if $\vco^\da_k\geq \epsilon/n$ and $\rho_k=(1+\epsilon)^{\ceil{\log_{1+\epsilon}(\epsilon/n)}}$ otherwise. For each $k\in\pos$, one then has $\rho_k\leq\vco^\da_k+2\epsilon/n$ and
\begin{align}
    \topl{k}{\rexp}
    &\leq\sum_{s\leq k,s\in\pos}\vco^\da_{s}+\sum_{s< k,s\notin\pos}\vco^\da_{\prevp{s}}+k\cdot\frac{2\epsilon}{n}\notag\\
    &\leq2\epsilon+\vco^\da_k+\sum_{s< k,s\in\pos}(\nextp{s}-s)\vco^\da_{s}\notag\\
    &=2\epsilon+k\vco^\da_k+\sum_{s< k,s\in\pos}(\nextp{s}-1)(\vco^\da_{s}-\vco^\da_{\nextp{s}})\notag\\
    &\leq2\epsilon+(1+\epsilon)k\vco^\da_k+\sum_{s< k,s\in\pos}(1+\epsilon)s\cdot(\vco^\da_{s}-\vco^\da_{\nextp{s}})\notag\\
    &=2\epsilon+(1+\epsilon)\sum_{s\leq k,s\in\pos}(s-\prevp{s})\vco^\da_{s}\notag\\
    &\leq2\epsilon+(1+\epsilon)\sum_{s'\leq k}\vco^\da_{s'}\leq(1+3\epsilon)\topl{k}{\vco},\label{eqn:major:0}
\end{align}
where we use $\nextp{s}-1\leq (1+\epsilon)s$ for each $s\in\pos$, and the fact that $\vco^\da$ is non-increasing. The last inequality is due to $\topl{k}{\vco}\geq\vco^\da_1=1$.

For each $k\notin\pos$, by the iterative definition of $\pos$, it follows that $(1+\epsilon)(k-1)\geq(1+\epsilon)\prevp{k}>k$, thus $k>1+1/\epsilon$ and $\nextp{k}<(1+\epsilon)k+1\leq(1+2\epsilon)k$. Using \eqref{eqn:major:0} we have just proved and the fact that $\rexp,\,\vco^\da$ are both non-increasing, we have
\[
\topl{k}{\rexp}\leq\topl{\nextp{k}}{\rexp}\leq(1+3\epsilon)\topl{\nextp{k}}{\vco}\leq(1+11\epsilon)\topl{k}{\vco},
\] 
where we use $\nextp{k}\leq(1+2\epsilon)k$ in the last inequality. \cref{lemma:majorization} gives the desired result.\qed
\section{A Simpler Special Case}\label{section:inner:topl}

In this section, we consider a special case of \glb, where each inner norm $\innorm_i=\topp{k_i},\,k_i\in[n]$ (recall for $\boldu\geq0$, $\topp{k_i}(\boldu)$ returns the sum of largest $k_i$ entries in $\boldu$), and the outer norm $\outnorm=\mcal{L}_\infty$. We call it the \maxtopk problem.

For any fixed $\epsilon>0$, we devise a deterministic $(3+\epsilon)$-approximation for \maxtopk. Analogous to previous algorithms, we set out to guessing the values at different indexes of the optimal assignment, but with slight modifications. 

Fix a small $\epsilon>0$. Define $\pos\subseteq[n]$ iteratively as in the PTAS in \cref{theorem:norm:value} for \normval, as well as $\mathsf{next},\,\mathsf{prev}$. Instead of optimizing $\topp{k_i}$ norm for machine $i$, we consider $\topp{k_i'}$ norm, where $k_i'=k_i$ if $k_i\in\pos$ and $k_i'=\prevp{k_i}$ otherwise. Notice by the iterative definition of $\pos$, if $k_i\notin\pos$, one has $k_i'<k_i<(1+\epsilon)\prevp{k_i}=(1+\epsilon)k_i'$.

Likewise, let $\sigma^\star:\jobs\rightarrow\machines$ be an unknown optimal assignment with optimum $\opt\geq0$, and $\vco\in\Rpos^n$ be defined in a subtly different way: for each $k\in[n]$, let $\vco_k$ be the maximum $k$-th largest job size among machines that has $k_i'\geq k$, that is, $\vco_k=\max_{i\in\machines:k_i'\geq k}\boldp_i\assigned{\sigma^\star}^\da_k$ (if there are no such machines or jobs, it is 0). It is easy to see that, since the assigned job-size vectors are non-increasingly sorted and the sets $\{i:k_i'\geq k\}$ are (inclusion-wise) non-increasing in $k$, $\vco$ is a non-increasing vector.
We guess $P=\{\rho_k\}_{k\in\pos}$ as a non-increasing vector of integer powers of $1+\epsilon$ such that 
\begin{enumerate}[label=(\roman*)]
	\item $\rho_k\in[\vco_k,(1+\epsilon)\vco_k)$ if $\vco_k\geq\epsilon\vco_1/n$,
	\item $\rho_k=(1+\epsilon)^{\ceil{\log_{1+\epsilon}(\epsilon\vco_1/n)}}$ otherwise.
\end{enumerate}

As before, after we guess the \emph{exact} value of $\vco_1$ and fix it, there are at most $(n/\epsilon)^{O(1/\epsilon)}$ such vectors. 
Suppose $\vco_1>0$ and all other guesses are correct in the sequel.
It follows that $\rho_1\leq (1+\epsilon)\vco_1\leq (1+\epsilon)\opt$ and for each $i\in\machines$,
\begin{enumerate}[label=(\roman*)]
    \item if $\vco_{k_i'}\geq\epsilon\vco_1/n$, we have $k_i'\rho_{k_i'}\leq (1+\epsilon)k_i'\vco_{k_i'}$. Because in the optimal solution, there exists a machine $i'\in\machines$ such that $k_{i'}'\geq k_i'$ and its $k_i'$-th largest assigned job size is $\vco_{k_i'}$, $i'$ has a $\topp{k_i'}$ norm at least $k_i'\vco_{k_i'}$. Combined with $k_{i'}\geq k_{i'}'\geq k_i'$, one has $k_i'\rho_{k_i'}\leq (1+\epsilon)\topl{k_i'}{\boldp_{i'}\assigned{\sigma^{\star}}}\leq
    (1+\epsilon)\topl{k_{i'}}{\boldp_{i'}\assigned{\sigma^{\star}}}\leq
    (1+\epsilon)\opt$;
    \item otherwise, we have $\rho_{k_i'}=(1+\epsilon)^{\ceil{\log_{1+\epsilon}(\epsilon\vco_1/n)}}\leq 2\epsilon\vco_1/n$, thus $k_i'\rho_{k_i'}\leq2\epsilon\vco_1\leq2\epsilon\cdot \opt$.
\end{enumerate}

Consider the following relaxation where $x_{ij}\geq0$ represents the extent we assign job $j$ to machine $i$.
\begin{align}
	\text{min} && r\tag{$\operatorname{M-Top}$}\label{lp:max:topl}\\
	\text{s.t.} && \sum_{j\in\jobs}(p_{ij}-\rho_{k_i'})^+x_{ij}&\leq r\quad\forall i\in\machines\tag{$\operatorname{M-Top}.1$}\label{lp:max:topl1}\\
	&& \sum_{i\in\machines}x_{ij}&=1\quad\forall j\in\jobs\tag{$\operatorname{M-Top}.2$}\label{lp:max:topl2}\\
	&& x_{ij}&=0\quad p_{ij}>\rho_1\tag{$\operatorname{M-Top}.3$}\label{lp:max:topl3}\\
	&& x&\geq0.\notag
\end{align}
\begin{lemma}\label{lemma:lp:maxtopl}
\ref{lp:max:topl} has optimum at most $\opt$.
\end{lemma}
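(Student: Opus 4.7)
The plan is to construct an explicit feasible solution to $\operatorname{M-Top}$ from the unknown optimal assignment $\sigma^\star$, with objective value $\bar r \leq \opt$. Concretely, I would set $\bar x_{ij} = 1$ if $\sigma^\star(j) = i$ and $0$ otherwise, and take $\bar r = \opt$, then verify all three groups of constraints.

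The assignment constraint \eqref{lp:max:topl2} holds trivially since $\sigma^\star$ maps each job to exactly one machine. For the forbidden-assignment constraint \eqref{lp:max:topl3}, I would argue as follows: every machine $i$ has $k_i \geq 1$, so $k_i' \geq 1$, which means each $i$ belongs to the index set defining $\vco_1$. Consequently $\vco_1$ equals the largest job size assigned by $\sigma^\star$ on \emph{any} machine, so $\sigma^\star(j) = i$ implies $p_{ij} \leq \vco_1 \leq \rho_1$ by the guessing rule for $\rho_1$.

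The main step is the norm constraint \eqref{lp:max:topl1}. Fix a machine $i$ and let $\boldu = \boldp_i\assigned{\sigma^\star}$, so the LHS becomes $\sum_j (u_j - \rho_{k_i'})^+$. The crucial observation is that $k_i' \leq k_i'$ places $i$ in the index set used to define $\vco_{k_i'}$, hence $u^\da_{k_i'} \leq \vco_{k_i'} \leq \rho_{k_i'}$ (the second inequality holds in both guessing cases, since in case (ii) $\vco_{k_i'} < \epsilon\vco_1/n \leq \rho_{k_i'}$). Therefore the number $\ell$ of coordinates of $\boldu$ strictly exceeding $\rho_{k_i'}$ is at most $k_i' - 1 < k_i$. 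Then
\begin{equation*}
\sum_j (u_j - \rho_{k_i'})^+ = \sum_{j:\, u_j > \rho_{k_i'}} (u_j - \rho_{k_i'}) \leq \sum_{j:\, u_j > \rho_{k_i'}} u_j = \topl{\ell}{\boldu} \leq \topl{k_i}{\boldu} \leq \opt,
\end{equation*}
where the last inequality uses $\outnorm = \mcal{L}_\infty$ and $\innorm_i = \topp{k_i}$, so $\makespan(\sigma^\star) = \max_{i'} \topl{k_{i'}}{\boldp_{i'}\assigned{\sigma^\star}}$.

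I do not anticipate a real obstacle here; the only subtle point is establishing $u^\da_{k_i'} \leq \rho_{k_i'}$, which relies on unpacking the definition of $\vco_{k_i'}$ (namely that it is a max over machines with $k_{i'}' \geq k_i'$, and $i$ itself qualifies) together with the two-case guessing rule for $\rho$. Once this is in hand, the bound $\sum_j (u_j - \rho_{k_i'})^+ \leq \topl{k_i}{\boldu}$ follows simply by truncating rather than invoking \cref{claim:topl} in full.
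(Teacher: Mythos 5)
Your proposal is correct and follows essentially the same route as the paper: both take the integral solution induced by $\sigma^\star$ with $\bar r=\opt$, use $\boldp_i\assigned{\sigma^\star}^\da_{k_i'}\leq\vco_{k_i'}\leq\rho_{k_i'}$ (since $i$ itself lies in the index set defining $\vco_{k_i'}$) to bound the number of positive summands by about $k_i'$, and then conclude via $\topl{k_i'}{\cdot}\leq\topl{k_i}{\cdot}\leq\opt$. You are a bit more explicit in checking \eqref{lp:max:topl3}, which the paper leaves as routine, but the core argument is identical.
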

\begin{proof}
    Define an integral solution $x^\star\in\{0,1\}^{\machines\times\jobs}$ according to the optimal assignment $\sigma^\star$ and let $r^\star=\opt$. 
    It suffices to show that $(x^\star,r^\star)$  satisfies \eqref{lp:max:topl1}.
    For each $i\in\machines$, because $\rho_{k_i'}\geq\vco_{k_i'}$, we have
    \begin{align*}
    \sum_{j\in\jobs}(p_{ij}-\rho_{k_i'})^+x_{ij}^\star=\sum_{j\in\sigma^{\star-1}(i)}(p_{ij}-\rho_{k_i'})^+\leq\sum_{j\in\sigma^{\star-1}(i)}(p_{ij}-\vco_{k_i'})^+.
    \end{align*}
    
    Since the $k_i'$-th largest job assigned to $i$ has size at most $\vco_{k_i'}$ by definition, the above sum has at most $k_i'$ non-zero entries, thus at most $\topl{k_i'}{\boldp_i\assigned{\sigma^\star}}$. The lemma now follows since $k_i'\leq k_i$ and $\topl{k_i'}{\boldp_i\assigned{\sigma^\star}}\leq\topl{k_i}{\boldp_i\assigned{\sigma^\star}}\leq \opt$.
\end{proof}

We solve \ref{lp:max:topl} and obtain a solution $(\bar x,\bar r)$, and $\bar r\leq \opt$ by \cref{lemma:lp:maxtopl}. Using the classic rounding algorithm by Shmoys and Tardos \cite{shmoys1993approximation} (based on the original job sizes $p_{ij}$), we obtain an integral assignment $\hat x\in\{0,1\}^{\machines\times\jobs}$ such that $\sum_i\hat x_{ij}=1$ for each job $j$. 
More precisely, make $n_i:=\ceil{\sum_j\bar x_{ij}}$ copies of each machine $i$; in non-increasing order of $p_{ij}$, fractionally assign the jobs $\{j:\bar x_{ij}>0\}$ to the same extent, and sequentially on the copies of $i$, such that the first $n_i-1$ copies are all assigned exactly unit mass of jobs. We then use standard methods \cite{lovasz2009matching} and round the resulting fractional matching to an arbitrary integral matching $\hat x$, and each job is matched due to \eqref{lp:max:topl2}.

Since we assign the jobs in non-increasing order of $p_{ij}$ to the copies, it follows that for the $t$-th copy $i_t$ of machine $i$, $t\geq2$, the assigned job size under $\hat x$ is at most the \emph{average} job size on $i_{t-1}$ under $\bar x$. Each assigned job size is at most $\rho_1$ due to \eqref{lp:max:topl3}.
Hence for each machine $i$, the $\topp{k_i'}$ norm must be attained over its $k_i'$ foremost copies $\{i_1,\dots,i_{k_i'}\}$, which is at most $\rho_1+\sum_{t\leq k_i'}\sum_jp_{i_tj}\bar x_{i_tj}$.
Using simple subadditivity $(s+t)^+\leq s^++t^+$ and that each machine copy is assigned to an extent of at most 1, this is in turn bounded by
\begin{align*}
\rho_1+k_i'\rho_{k_i'}+\sum_{t\leq k_i'}\sum_{j}(p_{i_tj}-\rho_{k_i'})^+\bar x_{i_tj}
&\leq \rho_1+k_i'\rho_{k_i'}+\sum_{j\in\jobs}(p_{ij}-\rho_{k_i'})^+\bar x_{ij}\\
&\leq (2+2\epsilon)\opt+\bar r\leq (3+2\epsilon)\opt.
\end{align*}

Finally using $k_i\leq (1+\epsilon)k_i'$, the $\topp{k_i}$ norm is bounded by $(1+\epsilon)(3+2\epsilon)\opt\leq(3+7\epsilon)\opt$, whence we obtain the following theorem.
\begin{theorem}\label{theorem:maxoftopl}
For each $\epsilon>0$, there exists a deterministic $(3+\epsilon)$-approximation algorithm for \emph{\maxtopk} with running time $(n/\epsilon)^{O(1/\epsilon)}\cdot poly(m,n)$.
\end{theorem}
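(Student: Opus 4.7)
The plan is to reduce this to an unrelated-scheduling-style LP rounding problem of the Lenstra--Shmoys--Tardos type, with an extra ``threshold guessing'' step tailored to top-$k_i$ inner norms. First, I would introduce the same geometrically-spaced subset $\pos \subseteq [n]$ as used in the PTAS for \normval, and for each machine round $k_i$ down to its nearest $\pos$-predecessor $k_i'$. Since $k_i < (1+\epsilon) k_i'$ and the top-$k$ norm is monotone in $k$, a $(1+\epsilon)$-approximation with respect to the $\topp{k_i'}$ norms immediately gives a $(1+\epsilon)^2$-approximation for $\topp{k_i}$.

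Next, I would guess, for each $k \in \pos$, an integer power $\rho_k$ of $(1+\epsilon)$ that closely approximates $\vco_k$, defined as the maximum, over machines $i$ with $k_i' \geq k$, of the $k$-th largest job size assigned to $i$ under some fixed optimum. A key observation is that $k_i' \rho_{k_i'} \leq (1+O(\epsilon))\opt$: the machine $i'$ realizing $\vco_{k_i'}$ has at least $k_i'$ assigned jobs of size $\geq \vco_{k_i'}$, so its $\topp{k_{i'}}$ norm already pays $k_i' \vco_{k_i'}$. The number of candidate vectors $\{\rho_k\}$ is $(n/\epsilon)^{O(1/\epsilon)}$, so I can enumerate them.

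Having guessed $\rho$, I would write the natural LP with assignment variables $x_{ij} \geq 0$, constraint $\sum_i x_{ij} = 1$, the pruning $x_{ij} = 0$ when $p_{ij} > \rho_1$, and for each machine the ``excess'' constraint
\[
\sum_{j\in\jobs} (p_{ij} - \rho_{k_i'})^+ x_{ij} \leq r.
\]
Feasibility of this LP with $r = \opt$ follows from the integral optimum, since on each machine the excess $(p_{ij} - \vco_{k_i'})^+$ is positive on at most $k_i'$ jobs and is bounded by the corresponding $\topp{k_i'}$ contribution. I would then invoke the classical Shmoys--Tardos rounding (sorting jobs by $p_{ij}$, splitting across $\lceil \sum_j \bar x_{ij}\rceil$ copies of machine $i$, then rounding the resulting fractional bipartite matching to an integral one).

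The key to getting exactly $3+\epsilon$ is the top-$k_i'$ bound on each machine in the rounded solution. Ranking the copies of $i$ by decreasing average job size, at most the first copy can receive a job that was not already ``averaged down'' from a fractional assignment, and such a single job contributes at most $\rho_1 \leq (1+\epsilon)\opt$. The remaining $k_i' - 1$ copies contribute at most $k_i' \rho_{k_i'} + \sum_j (p_{ij} - \rho_{k_i'})^+ \bar x_{ij} \leq (1+\epsilon)\opt + \opt$ via subadditivity of $(\cdot)^+$. Summing yields $(3+O(\epsilon))\opt$, and absorbing the outer $(1+\epsilon)$ from the $k_i \to k_i'$ replacement completes the argument. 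The main obstacle I anticipate is the careful per-copy charging: one must argue that the overflow incurred by rounding a single fractional job on each copy can be charged either to $\rho_1$ (on the first copy) or to the averaged fractional mass from a previous copy, so that the total overflow across the top $k_i'$ copies telescopes into $\rho_1 + k_i' \rho_{k_i'} +$ LP excess, rather than $k_i' \rho_1$.
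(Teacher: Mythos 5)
Your proposal matches the paper's proof step by step: the same replacement $k_i \to k_i'$ via $\pos$, the same definition of $\vco_k$ and guessing of $\{\rho_k\}$, the same LP with the pruning and excess constraints, the same Shmoys--Tardos copy-splitting rounding, and the same $\rho_1 + k_i'\rho_{k_i'} + \bar r$ charging that yields the $(3+O(\epsilon))$ bound before absorbing the final $(1+\epsilon)$ from $k_i \leq (1+\epsilon)k_i'$. This is essentially the paper's argument.
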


\end{document}